\newtheorem{mytheo}{Theorem}
\newtheorem{mylemma}{Lemma}
\begin{document}

\title{Loop Quantum Cosmology of non-diagonal Bianchi models}

\author{Matteo Bruno}
 \email{matteo.bruno@uniroma1.it}
 \affiliation{Physics Department, Sapienza University of Rome, P.le A. Moro 5, 00185 Roma, Italy}
\author{Giovanni Montani}
 \email{giovanni.montani@enea.it}
 \affiliation{ENEA, C.R. Frascati (Rome), Italy Via E.\ Fermi 45, 00044 Frascati (Roma), Italy} 
 \affiliation{Physics Department, Sapienza University of Rome, P.le A. Moro 5, 00185 Roma, Italy}

\begin{abstract}
 The non-diagonal Bianchi models are studied in the loop framework for their classical and quantum formulation. The expressions of the Ashtekar-Barbero-Immirzi variables and their properties are found to provide a loop quantization of these models. In the special case of Bianchi I Universe, it is shown that the geometrical operators result invariant from the diagonal description. Hence, the kinematical Hilbert space of the non-diagonal Bianchi I model has similar features to the diagonal one.
\end{abstract}

\maketitle

\section{Introduction}
One of the most natural arenas to test quantum gravity proposals \cite{thiemann_2007,cianfrani2014canonical} is provided by the Bianchi Universes \cite{BKL70, Montani2008}, in particular, by the simplest Bianchi I model \cite{KL63,thorne1973gravitation}, which is characterized by a zero spatial curvature.\\
The canonical quantization of the Bianchi I Universe has been faced in the metric approach in \cite{Misner1969,Benini_2007} and its classical singular nature has been preserved via the quantum dynamics, as soon as the behaviour of localized wave packet is concerned (for the possibility to implement a quantum bounce see \cite{Giovannetti2022}).\\
The implementation of Loop Quantum Gravity to the Bianchi I dynamics has been pursued in \cite{ashtekar2003mathematical,ashtekar2006quantumI,ashtekar2006quantumII,ashtekar2009loop} and a bouncing cosmology picture emerged as a consequence of the discrete nature of the geometrical operator spectrum. This procedure of quantization has been criticized in \cite{Cianfrani_2012}, see also \cite{CianfraniMontani2012critical,CianfraniMontani2012gauge}, comparing the structure of the $SU(2)$ symmetry of 
the general case with the theory emerging from the homogeneity constraint. The same bouncing behaviour of the Bianchi I cosmology has been also observed in the metric approach when the polymer quantum mechanics is implemented in the cosmological configurational space \cite{MONTANI2018,Moriconi2016,Moriconi2017,ANTONINI2019,Giovannetti2021}.\\

However, the standard formulation of the Bianchi models which is addressed for quantum analyses is the so-called diagonal case, i.e. when the three independent directions of space are governed by three different scale factors and no off-diagonal contributions emerge (in this case the super-momentum constraint results to be identically vanishing). A different, non-diagonal, approach to classical Bianchi cosmologies has been discussed in \cite{RYAN1972}, see also \cite{Belinski_2014}, where the implications of dealing with off-diagonal terms have been extensively studied both from the point of view of a Hamiltonian and field equation formulation. These studies clarify how the non-diagonal Bianchi Universes have a more complicated configurational structure, in which additional infinite walls came out in the standard spatial curvature terms when the singularity is approached.\\
An interesting generalization of the Bianchi models, as viewed in the framework of the Loop quantization, in correspondence to a non-zero Gauss constraint has been discussed in \cite{bojowald2000loopI,bojowald2000loopII,bojowald2000loopIII,bojowald2013mathematical}, defining a kinetical construction for the Hilbert space of the theory which accounts for the real $SU(2)$ symmetry of the general formulation.\\

Here, we considered an intermediate scenario for the quantization of the Bianchi I cosmology, never addressed before, which consists of starting from a metric formulation of a non-diagonal Bianchi I model, up to arriving at a kinematical formulation of the quantization procedure in terms of Ashtekar-Barbero-Immirzi variables.\\
We start with a metric representation similar to the one in \cite{RYAN1972,Belinski_2014} and, then, we construct first the standard Hamiltonian formulation. Subsequently, we translate our analysis in terms of the Ashtekar-Barbero-Immirzi variables.\\
The peculiarity of our representation is that we deal with three diagonal connection-like variables and three Euler angles, accounting for the non-diagonal nature of the model, i.e. for the rotation in time of the spatial directions, along which the connections are referred. It is worth noting that also in this non-diagonal setting, the Gauss constraint identically vanishes so that the $SU(2)$ symmetry can not be regarded as the driving structure in constructing the kinematical Hilbert space.\\

As one of the main results of the present analysis, we are able to demonstrate that, via a suitable rotation, we can lead back to a diagonal representation of the fluxes which appears as isomorphic to that one of the diagonal case. This fact suggests that the quantization procedure can be separated into two different parts, one corresponding to three diagonal connections and the other one involving the three angles.\\ 
This point of view is enforced by the possibility to introduce a positive definite scalar product at fixed values of the angle in a given state. So, we arrive at defining a kinematical Hilbert space for the model, which resembles the Bohr compactification procedure for the connection-like variables and a standard orthonormality request for states, corresponding to different angles.\\ 

Then, in the second part of the manuscript, we search for a more axiomatic formulation of the kinematical Hilbert space, based on a $U(1)^3$ representation of the fundamental quantization algebra and an extension to a $U(1)^6$ representation. Both approaches appear to be not completely viable but offer an interesting theoretical and physical point of view on how the quantum features of the non-diagonal Bianchi I model can emerge starting from different settings of the configurational quantum space.\\
The $U(1)^3$ formulation applies the Bohr compactification procedure to the connection-like variables only but finds the difficulty that the angles unavoidable enter in the almost-periodic functions, so preventing a rigorous and conclusive construction of a kinematical Hilbert space.\\
The analysis based on the symmetry $U(1)^6$ is instead constructed by a ``natural" extension of the almost-periodic functions' space to match it with the correct numbers of independent variables, all the six configurational coordinates are associated with a Bohr compactification procedure.\\
This formulation emulates that one in \cite{bojowald2013mathematical}, replacing the $SU(2)$ with the $U(1)^3$ and $U(1)^6$ symmetry. However, the different nature of the quantum numbers emerging in the two approaches prevents a complete parallelism, which could provide a natural kinematical Hilbert space.\\ 

We conclude that the physical relevance of quantizing a non-diagonal Bianchi I model relies on the classical notion that, such a morphology is naturally induced when the matter is introduced, for instance in the form of a perfect fluid not at rest with the synchronous reference frame.\\
On a classical level, the effect of the matter has been shown to induce a ``slow" rotation of the so-called Kasner axes, i.e. the independent directions to which the 
scale factors are referred. This rotation effect is not able to alter some properties of the Bianchi models, like the chaotic nature of the Bianchi VIII and IX cosmologies near enough to the singularity.\\
To some extent, the present construction of a kinematical Hilbert space, in which the angles are not affected by a Bohr compactification scenario and they trivially enters the scalar product between two states, can be regarded as a pre-dynamical construction which is inspired by the classical behaviour of a non-diagonal model, discussed above.\\

The paper is structured as follows.\\
In Sec.\ref{sec:Review} we recap some mathematical aspects of the classical approach to Bianchi models. Moreover, we present the loop quantization procedure for the diagonal Bianchi I model that will be the base for the quantization of the non-diagonal one.\\
In Sec.\ref{sec:non-d} we present some calculations of classical quantities for the non-diagonal Bianchi models. In particular, we find the ADM Lagrangian and the Ashtekar-Barbero variables for any non-diagonal Bianchi universes.\\
In Sec.\ref{sec:Bianchi1} we restrict our classical analysis to Bianchi I models. We compute the constraints and verify the good properties of their algebra.\\
In Sec.\ref{sec:quan1} we study the features of the holonomy in non-diagonal Bianchi models. Furthermore, we propose a quantization based on the quantum geometry, that allows us to quantize the theory in a similar way to the diagonal case.\\
In Sec.\ref{sec:repp} we want to analyze an alternative approach to quantize the theory, more similar to the canonical Loop Quantum Gravity. It is based on the representation of the holonomy group, which, in our setting, is commutative. We use the groups $U(1)^3$, in analogy with the other works on Loop Quantum Cosmology, and $U(1)^6$. Despite the construction of the kinematical Hilbert space is induced from the one for $SU(2)$, these theories have some issues in the physical interpretations.   

\section{Homogeneous universes and Bianchi models \label{sec:Review}}
In this section, we will summarize the known aspects of the Bianchi models and their quantization, with particular attention to the diagonal anisotropic Bianchi I model.\\
In a homogeneous model, the space-time is a manifold $\mathcal{M}=\mathbb{R}\times\Sigma$, where $\Sigma$ is a three-dimensional homogeneous space. We also require that the group of isometries $S$ acts freely on $\Sigma$, thus $\Sigma$ can be identified with $S$. On such a space, exists a basis of left-invariant one-form $\omega^I$ that satisfies the Maurer-Cartan equation
\begin{equation}
    \label{MC-eq}
    d\omega^I+\frac{1}{2}f^I_{JK}\omega^J\wedge\omega^K=0
\end{equation}
Moreover, the left-invariant vector fields on $\Sigma$ define a Lie algebra $\mathfrak{s}$. A basis for this algebra is given by the dual of the $\omega^I$, such vectors $\xi_I$ satisfy $\omega^J(\xi_I)=\delta^J_I$, $\theta_{MC}=\sum_I \omega^I\xi_I$, where $\theta_{MC}$ in the Maurer-Cartan one-form on $S$, and
\begin{equation}
    \label{g-eq}
    [\xi_I,\xi_J]=f^K_{IJ}\xi_K.
\end{equation}
The Riemannian metric $h$ induced on $\Sigma$ by the space-time metric $g$ can be decomposed as
\begin{equation}
    h=\eta_{IJ}\omega^I\otimes\omega^J
\end{equation}
where $\eta_{IJ}$ is a constant symmetric tensor on $\Sigma$.\\
Moreover, the homogeneous connection $A$ on $\Sigma$ can be written as $A=\phi\circ\theta_{MC}$, where $\phi:\mathfrak{s}\to\mathfrak{su}(2)$ is a linear map \cite{bojowald2013mathematical}.\\
Using coordinates $(t,x^i)$ adapted to the ADM formalism, where $x^i$ are a set of coordinates on $\Sigma$, the Ashtekar's variables can be written as
\begin{equation}
    \label{MaBoVa}
    A^a_i=\phi^a_I\omega^I_i,\ \ \ \ E^i_a=|det(\omega^I_a)|p^I_a\xi^i_I.
\end{equation}

Bianchi I models are characterized by null structure constants $f^K_{IJ}=0$. In this case, $S=\mathbb{R}^3$ and the space-time is topologically $\mathbb{R}^4$. Moreover, $\Sigma$ is flat;, in fact, it exists a set of coordinates in which $\omega^I_i=\delta^I_i$ (since $[\xi_I,\xi_J]=0$), thus $h_{i,j}(t,x)=\eta_{IJ}\delta^I_i\delta^J_j$ is constant on $\Sigma$.

\subsection{Loop quantization of the Bianchi I universe \label{sec:loop1}}
The diagonal Bianchi I model allows a quantization of the universe in the loop quantization program. This quantization was provided by A.Ashtekar and E. Wilson-Ewing in \cite{ashtekar2009loop}. The tensor $\eta_{IJ}$ is diagonal in diagonal models, such as the connection and the densitized triads. The metric can be written as
\begin{equation}
    h=-Ndt^2+a_1^2(dx^1)^2+a_2^2(dx^2)^2+a_3^2(dx^3)^2,
\end{equation}
where $a_i$ are the scale factors in each direction. While the Ashtekar's variables read
\begin{equation}
    A^a_i=c^a(L^a)^{-1}\omega^a_i,\ \ \ \ E^i_a=|det(\omega^I_a)|L_aV_0=^{-1}p_a\xi^i_a.
\end{equation}
These variables are fully characterized by $c^a,p_a$, and the following Poisson brackets hold
\begin{equation}
    \{c^a,p_b\}=\frac{8\pi G\gamma}{c^3 V_0}\delta^a_b,
\end{equation}
where $\gamma$ is the Barbero-Immirzi parameter, $V_0$ is the fiducial volume of the fiducial cell $\mathcal{V}$ (i.e. $V_0=\int_{\mathcal{V}}\omega^1\wedge\omega^2\wedge\omega^3=L_1L_2L_3$, where $L_I$ is the fiducial length of the I-th edge).
The diagonal components of the Ashtekar's variables can be written in terms of scale factors
\begin{align}
    \label{D-Va}
    p_I&=\mathrm{sgn}(a_I)a_Ja_KL_JL_K\ \mathrm{with}\ \epsilon_{IJK}=1,\\
    c_I&=\frac{\gamma L_I}{N}\frac{da_I}{dt}
\end{align}

The quantum theory presents basis states $|p_1,p_2,p_3\rangle$, which are eigenstates of the quantum geometry. In the state $|p_1,p_2,p_3\rangle$ the face $\sigma_i$ of the fiducial cell $\mathcal{V}$ orthogonal to the axis $x^i$ has area $|p_i|$. Moreover, the elementary operators act on this basis as 
\begin{align}
    &\hat{p}_1|p_1,p_2,p_3\rangle=p_1|p_1,p_2,p_3\rangle,\\ &\widehat{\exp(i\lambda c)}|p_1,p_2,p_3\rangle=|p_1-k\gamma\hbar\lambda,p_2,p_3\rangle
\end{align}
where $k=\frac{8\pi G}{c^3}$ is the gravitational constant.\\
From the quantum geometry in LQG, we know that, to have the best coarse grained, the state $|p_1,p_2,p_3\rangle$ is reproduced by an LQG state associated with a spin network that intersects the surface $\sigma_3$ with $N_3$ edges, each one carrying a quantum of area $4\pi\gamma\sqrt{3}\ell_P^2$. Hence, $N_3$ is given by
\begin{equation}
    4\pi\gamma\sqrt{3}\ell_P^2N_3=|p_3|
\end{equation}
Consider the rectangle of minimal area pierced by exactly one edge. We refer to it as the plaquette $\square_{1,2}$ whose area is $4\pi\gamma\sqrt{3}\ell_P^2$. The fiducial length of its edges is $\bar{\mu}_1L_1$ and $\bar{\mu}_2L_2$. Since the fiducial area on the surface is $L_1L_2$, we obtain
\begin{equation}
    N_3\bar{\mu}_1L_1\bar{\mu}_2L_2=L_1L_2.
\end{equation}
Equaling the two previous equations, we have
\begin{equation}
    \bar{\mu}_1\bar{\mu}_2=4\pi\gamma\sqrt{3}\frac{\ell_P^2}{|p_3|}
\end{equation}
Repeating this procedure for the faces $\sigma_1$ and $\sigma_2$, we can characterize the $\bar{\mu}_i$
\begin{equation}
    \bar{\mu}_i=\sqrt{4\pi\gamma\sqrt{3}\frac{|p_i|}{|p_jp_k|}}\ \ \ \mathrm{with}\ \epsilon_{ijk}=1.
\end{equation}
Thus, the quantization of the geometry characterizes the states of the kinematical Hilbert space. Moreover, it gives us a natural choice of the plaquette for calculating the curvature operator, which is necessary for describing the quantum dynamics.

\section{Non-diagonal models in the metric variables \label{sec:non-d}}
In a non-diagonal Bianchi model, the hypersurface $\Sigma$ has a metric
\begin{equation}
\label{mymetr}
    h_{ij}(t,x)=\eta_{IJ}(t)\omega^I_i(x)\omega^J_j(x).
\end{equation}
Following Belinski's approach to non-diagonal metric in \cite{Belinski_2014}, we want to factorize $\eta_{IJ}$ in a diagonal matrix similar to it via a rotation. Since $\eta_{IJ}(t)$ is a symmetric matrix depending $C^{\infty}$ on one parameter, it always has three eigenvalues $\{a^2(t),b^2(t),c^2(t)\}$, that can be chosen continuously differentiable on the whole interval \cite{rellich1969perturbation}. Moreover, if the continuous eigenvalues are such that no two of them are equal at any $t\in \mathbb{R}$ if they are not equal for all $t\in \mathrm{R}$. Then all the eigenvalues and all the eigenvectors can be chosen $C^{\infty}$ in $t$ \cite{alekseevsky1998choosing}. Hence, $\eta_{IJ}$ can be decomposed as
\begin{equation}
\label{mydecom}
    \eta_{IJ}=\Gamma_{AB}R^A_I R^B_J\ \ \ \textrm{with}\ \ \Gamma_{AB}=\begin{pmatrix}
                a^2 & 0 & 0\\
                0 & b^2 & 0\\
                0 & 0 & c^2
                \end{pmatrix}
\end{equation}
and $R$ a rotation matrix depend on the parameter $t$ and it is infinitely differentiable. The eigenvalues are positive because $\eta_{IJ}$ are the components of the Riemannian metric $h$ in basis $\omega^I$, therefore, it is a positive definite, symmetric matrix.\\ 
The rotation introduces three new variables: the angles. Using the Euler angles $\{\theta,\psi,\varphi\}$ the rotation reads
\begin{equation}
    R=\exp(\theta j_3)\exp(\psi j_2)\exp(\varphi j_3)
\end{equation}
where $j_2$ and $j_3$ are the generators of $SO(3)$, as matrices:
\begin{equation*}
    j_2=\begin{pmatrix}
            0 & 0 & -1\\
            0 & 0 & 0\\
            1 & 0 & 0
        \end{pmatrix}
    \ \ \ \ \ \ \ j_3=\begin{pmatrix}
            0 & 1 & 0\\
            -1 & 0 & 0\\
            0 & 0 & 0
        \end{pmatrix}
\end{equation*}
These angles can be interpreted as a physical rotation of the left-invariant vector fields and they are $C^{\infty}$ functions of $t$.

\subsection{Calculation of the ADM Lagrangian \label{sec:ADM-lagrangian}}
To construct a Hamiltonian theory, one can choose as configuration variables the three scale factors and the three Euler angles $\{a,b,c,\theta,\psi,\varphi\}$, from now, these variables are called metric variables. The Lagrangian can be calculated from the ADM formalism 
\begin{align}
\nonumber
    L_{ADM}=N\sqrt{h}\Big(\bar R +&\frac{1}{4N^2}h^{ri}h^{sj}(\dot h_{ij}-N_{i;j}-N_{j;i})\\ \nonumber
    &\ \ \ \ \ \times(\dot h_{rs}-N_{r;s}-N_{s;r})+\\ \nonumber
    -&\frac{1}{4N^2}h^{ij}h^{rs}(\dot h_{ij}-N_{i;j}-N_{j;i})\\
    &\ \ \ \ \ \times(\dot h_{rs}-N_{r;s}-N_{s;r})\Big).
\end{align}
A sketch of the calculation is presented below and passes through four terms. We rewrite the Lagrangian as the sum of four terms (excluding the scalar curvature $\Bar{R}$)
\begin{align*}
     L_{ADM}&=N\sqrt{h}\bigg(\bar R +\frac{1}{4N^2}\Big(h^{ri}h^{sj}\dot h_{ij}\dot h_{rs}\\
     &+h^{ri}h^{sj}(N_{i;j}+N_{j;i})(N_{r;s}+N_{s;r})\\
    &-h^{ri}h^{sj}\dot h_{ij}(N_{r;s}+N_{s;r})-h^{ri}h^{sj}\dot h_{rs}(N_{i;j}+N_{j;i})\\
    &-h^{ij}h^{rs}(\dot h_{ij}-N_{i;j}-N_{j;i})(\dot h_{rs}-N_{r;s}-N_{s;r})\Big)\bigg).
\end{align*} 
For the first term, we simply factorize the metric $h_{ij}$ as in Eq.(\ref{mymetr}) and (\ref{mydecom}).
\begin{align}
\label{1term}
\nonumber
    h^{ri}h^{sj}\dot h_{ij}\dot h_{rs}&=\Gamma^{dc}\Gamma^{gh}(\dot{\Gamma}_{dh}+\Gamma_{a'h}\dot R^{a'}_b\Lambda^b_d+\Gamma_{db'}\dot R^{b'}_f\Lambda^f_h)\\ \nonumber
    &\ \ \ \ \ \ \ \ \ \ \ \ \times(\dot{\Gamma}_{cg}+\Gamma_{g'g}\dot R^{g'}_a\Lambda^a_c+\Gamma_{ch'}\dot R^{h'}_e\Lambda^e_g)\\ \nonumber
    &=\Gamma^{ac}\Gamma^{bd}\dot{\Gamma}_{ab}\dot{\Gamma}_{cd}-4\Gamma^{ac}\dot{\Gamma_{ab}}(R\dot\Lambda)^b_c\\
    &\ \ \ \ \ +2\Gamma^{ab}\Gamma_{cd}(R\dot\Lambda)^d_a(R\dot\Lambda)^c_b+2(R\dot\Lambda)^b_c(R\dot\Lambda)^c_b
\end{align}
where $\Gamma^{ab}$ is the inverse of $\Gamma_{ab}$ and $\Lambda$ is the inverse matrix of $R$. Recalling that, since $R$ is a rotation, $\Lambda=R^t$ , therefore, the following properties hold
\begin{align}
\label{Rprop}
\nonumber
    &(a)\ \ \dot R \Lambda+R\dot \Lambda=\Lambda\dot R+\dot \Lambda R=0\\
    &(b)\ \ (R\dot \Lambda)^t=\dot R \Lambda=-R\dot\Lambda\\ \nonumber
    &(c)\ \ tr(R\dot\Lambda)=0
\end{align}
The property $(\ref{Rprop}a)$ follows from the Leibniz rule. The $(\ref{Rprop}b)$ can be proved using $(\ref{Rprop}a)$ and it shows the skew-symmetry of $R\dot\Lambda$, so it implies $(\ref{Rprop}c)$. Thus, in (\ref{1term}) one term vanishes and the first term reads
\begin{align*}
    h^{ri}h^{sj}\dot h_{ij}\dot h_{rs}=&\Gamma^{AC}\Gamma^{BD}\dot{\Gamma}_{AB}\dot{\Gamma}_{CD}\\
    &+2\Gamma^{AB}\Gamma_{CD}(R\dot\Lambda)^D_A(R\dot\Lambda)^C_B\\
    &+2(R\dot\Lambda)^B_C(R\dot\Lambda)^C_B.
\end{align*}
 For the other terms, it is useful to recall the Maurer-Cartan equation in local coordinates
\begin{equation}
    \frac{\partial\omega_i^I}{\partial x^j}-\frac{\partial\omega_j^I}{\partial x^i}= f^I_{JK}\omega_i^J\omega_j^K.
\end{equation}
We are interested to compute a common factor to all the remaining terms $(N_{i;j}+N_{j;i})$.\\
Before this, we need to compute the Christoffel symbols
\begin{align}
\label{Christoffel}
\nonumber
    \bar{\Gamma}^k_{ij}N_k&=\frac{1}{2}h^{kl}(h_{il,j}+h_{jl,i}-h_{ij,l})N_k\\ \nonumber
    &=\frac{1}{2}(N_{i,j}+N_{j,i})+\frac{1}{2}N^l\eta_{IJ}\big(\omega^I_i\omega^J_{l,j}\\
    &\ \ \ \ \ \ +\omega^I_j\omega^J_{l,i}-(\omega^I_i\omega^J_j)_{,l}\big).
\end{align}
After some tedious calculations, the structure constants emerge using Eq.(\ref{MC-eq}). In fact, the factor reads
\begin{align*}
    (N_{i;j}&+N_{j;i})=(N_{i,j}+N_{j,i}-2\Bar{\Gamma}^k_{ij}N_k)\\
    &=-N^l\eta_{IJ}\big(\omega^I_i\omega^J_{l,j}+\omega^I_j\omega^J_{l,i}-(\omega^I_i\omega^J_j)_{,l}\big)\\
    &=-N^l\Big(\eta_{IJ}\omega^I_i(\omega^J_{l,j}-\omega^J_{j,l})+\eta_{IJ}(\omega^I_j\omega^J_{l,i}-\omega^I_{i,l}\omega^J_j)\Big).
\end{align*}
For symmetry of $\eta_{IJ}$ and the Maurer-Cartan equation, we get
\begin{align*}
    (N_{i;j}+N_{j;i})&=-N^l\Big(\eta_{IJ}\omega^I_if^J_{KL}\omega^K_l\omega^L_j+2\eta_{IJ}\omega^I_j\omega^J_{[l,i]}\Big)\\
    &=2N^K\eta_{IJ}f^J_{KL}\omega^I_{(i}\omega^L_{j)}.
\end{align*}
We recall that the shift vector $N^i$ on a homogeneous space can be factorized as $N^i(t,x)=N^I(t)\xi^i_I(x)$, where $N^I=N^i\omega_i^I$ depends on the time only.\\
For the second term, we find 
\begin{align}
\nonumber
    h^{ri}h^{sj}&(N_{i;j}+N_{j;i})(N_{r;s}+N_{s;r})=\\ 
    =&2N^A N^B(f^I_{AJ}f^J_{BI}+\eta^{IJ}\eta_{KL}f^K_{AI}f^L_{BJ})
\end{align}
The third term is the double product, it is trivial that 
$$h^{ri}h^{sj}\dot h_{ij}(N_{r;s}+N_{s;r})=h^{ri}h^{sj}(N_{i;j}+N_{j;i})\dot h_{rs}$$
since one can obtain one from another, via renaming $i\leftrightarrow r,\,j\leftrightarrow s$. This term can be easily calculated in a similar way to the second term
\begin{equation}
    h^{ri}h^{sj}\dot h_{ij}(N_{r;s}+N_{s;r})=-2N^K\eta^{IJ}\dot{\eta}_{JL}f^L_{KI}.
\end{equation}
The fourth term appears quadratically in the Lagrangian, it is
\begin{equation}
    h^{ij}(\dot h_{ij}-N_{i;j}-N_{j;i})=\eta^{IJ}\dot{\eta}_{IJ}+2N^K f^J_{KJ}.
\end{equation}
Recalling that for Bianchi A class models $f^J_{KJ}=0$, this term does not depend on the shift vector. To complete the decomposition, it remains to write $\eta_{IJ}$ and its inverse $\eta^{IJ}$ in terms of diagonal and rotation matrices. As matrix $\eta=\Lambda\Gamma R$, then,
$$\eta^{-1}\dot{\eta}=\Lambda\Gamma^{-1}R\dot{\Lambda}\Gamma R+\Lambda\Gamma^{-1}\dot{\Gamma} R+\Lambda\dot R,$$
and, from properties (\ref{Rprop}), $tr(\eta^{-1}\dot{\eta})=tr(\Gamma^{-1}\dot{\Gamma})$.\\
The ADM Lagrangian can be written in terms of scale factors and rotation matrices
\begin{widetext}
\begin{align}
\label{non-diagL}
\nonumber
    L_{ADM}=&N|\mathrm{det}(\omega^I_i)|\sqrt{\mathrm{det}(\Gamma_{AB})}\bigg(\bar R+\frac{1}{4N^2}\Big(\Gamma^{AC}\Gamma^{BD}\dot{\Gamma}_{AB}\dot{\Gamma}_{CD}+2\Gamma^{AB}\Gamma_{CD}(R\dot\Lambda)^D_A(R\dot\Lambda)^C_B\\
    &+2(R\dot\Lambda)^B_C(R\dot\Lambda)^C_B+2N^A N^B(f^I_{AJ}f^J_{BI}+\eta^{IJ}\eta_{KL}f^K_{AI}f^L_{BJ})+4N^K\eta^{IJ}\dot{\eta}_{JL}f^L_{KI}-\Gamma^{IJ}\dot{\Gamma}_{IJ}\Gamma^{KL}\dot{\Gamma}_{KL}\Big)\bigg).
\end{align}
\end{widetext}
The Lagrangian can be further manipulated. First of all, the scalar curvature can be written in terms of structure constant and $\eta$ only \cite{Landau2.14}. Moreover, $\Gamma$ can be written explicitly as a diagonal matrix $\Gamma_{AB}=a^2_{(A)}\delta_{AB}$, where $a_1=a,\,a_2=b,\,a_3=c$ are the scale factors, hence, $\sqrt{\mathrm{det}(\Gamma_{AB})}=|abc|$.\\
Furthermore, we can bring out $|\mathrm{det}(\omega^I_i)|$ from the integration on the space-time to compute the Einstein-Hilbert action $\mathcal{S}_{EH}$, because it is the only term that depends on the point of the hypersurface. It means
\begin{align}
\nonumber
    \mathcal{S}_{EH}&=-\frac{c^3}{16\pi G}\int dt\, d^3x\, L_{ADM}(t,x)\\ \nonumber
    &=-\frac{c^3}{16\pi G}\int dt\, \mathcal{L}(t)\int_{\Sigma}d^3x\,|\mathrm{det}(\omega^I_i)|\\
    &=-\frac{c^3}{16\pi G}V_0\int dt\, \mathcal{L}(t).
\end{align}
Hence, one can consider a Lagrangian that depends only on time $\mathcal{L}(t)$, which is defined as the homogeneous part of the Lagrangian defined in (\ref{non-diagL}).\\
The Lagrangian is complex and its writing in terms of elementary functions is quite difficult to read, an immediate simplification is to consider the Bianchi I model in which the structure constants vanish. Notice that, in this case, all terms which contain the shift vector vanish.

\subsection{Ashtekar's variables \label{sec:Ashtekar}}
The loop approach requires computing the densitized dreibein and the Ashtekar-Barbero variables. The dreibein vectors are pretty simple and they follow from the metric
\begin{equation}
    h_{ij}=\eta_{IJ}\omega^I_i\omega^J_j=\delta_{ab}e^a_ie^b_j
\end{equation}
using the decomposition of $\eta_{IJ}$ in Eq.(\ref{mydecom}), it easy to check
\begin{equation}
    e^a_i=a_{(a)}R^a_I\omega^I_i\ \ \ \ \ \textrm{$a$ is not summed,}
\end{equation}
where $a_1=a,a_2=b,a_3=c$. Its dual reads
\begin{equation}
    e^i_a=\tfrac{1}{a_{(a)}}\Lambda^I_a\xi^i_I\ \ \ \ \ \textrm{$a$ is not summed.}
\end{equation}
The proof that one is the dual of the other is trivial
\begin{equation*}
    e^a_ie^i_b=\tfrac{a_{(a)}}{a_{(b)}}R^a_I\omega^I_i\Lambda^J_b\xi^i_J=\tfrac{a_{(a)}}{a_{(b)}}R^a_I\Lambda^I_b=\tfrac{a_{(a)}}{a_{(b)}}\delta^a_b=\delta^a_b.
\end{equation*}
From this, the densitized dreibein is defined as
\begin{align}
\nonumber
    E^i_a=&\sqrt{h}e^i_a=|\mathrm{det}(\omega^I_i)|\frac{|a_1a_2a_3|}{a_{(a)}}\Lambda^I_a\xi^i_I\\
    =&|\mathrm{det}(\omega^I_i)|\mathrm{sgn}(a_{(a)})|a_ba_c|\Lambda^I_a\xi^i_I\ \ \ \ \textrm{with $\epsilon_{abc}=1$.}
\end{align}
Notice that $\Lambda$ is not a pure gauge rotation but it is a physical rotation, applied on the left-invariant vectors, necessary to have a non-diagonal metric.\\
To compute the connection we use the definition $A^a_i=\Gamma^a_i+\gamma K^a_i$. First of all, the extrinsic curvature is calculated
\begin{align}
\nonumber
    &K^a_i=K_{ij}e^{ja}=\frac{1}{2N}(\dot h_{ij}-N_{i;j}-N_{j;i})h^{jk}e^a_k\\ 
    &=\frac{1}{2N}a_{(a)}R^a_L\big(\eta^{LJ}\dot{\eta}_{JI}+N^A\eta^{LK}\eta_{IJ}f^J_{AK}+N^Af^L_{AI}\big)\omega^I_i.
\end{align}
For the spin part $\Gamma^a_i=\frac{1}{2}\epsilon^a_{\ bc}\omega^{bc}_i$, the spin connection is evaluated
\begin{align}
\nonumber
    \omega^{ab}_i=\frac{1}{2}\tfrac{a_{(b)}}{a_{(a)}}\Lambda^I_aR^b_Jf^J_{IK}\omega^K_i-\frac{1}{2}\tfrac{a_{(a)}}{a_{(b)}}\Lambda^I_bR^a_Jf^J_{IK}\omega^K_i+\\
    -\frac{1}{2}\tfrac{1}{a_{(a)}a_{(b)}}\eta_{LK}\Lambda^I_a\Lambda^J_b f^L_{JI}\omega^K_i.
\end{align}
One can check that this expression is skew-symmetric in $a,b$. Hence, for the spin term we obtain
\begin{equation}
    \Gamma^c_i=\frac{1}{2}\epsilon^{cab}\tfrac{a_{b}}{a_{a}}\Lambda^I_aR^b_Jf^J_{IK}\omega^K_i-\frac{1}{4}\epsilon^{cab}\tfrac{1}{a_{a}a_{b}}\eta_{LK}\Lambda^I_a\Lambda^J_b f^L_{JI}\omega^K_i.
\end{equation}
The connection $A^a_i$ is now expressed in terms of scale factor, rotation matrices and structure constants
\begin{align}
\label{connBianchi}
\nonumber
    &A^a_i=\Big(\frac{1}{2}\epsilon^{abc}\tfrac{a_{c}}{a_{b}}\Lambda^J_bR^c_Kf^K_{JI}-\frac{1}{4}\epsilon^{abc}\tfrac{1}{a_{b}a_{c}}\eta_{IJ}\Lambda^K_b\Lambda^L_c f^J_{LK}\\
    &+\frac{\gamma}{2N}a_{(a)}R^a_L\big(\eta^{LJ}\dot{\eta}_{JI}+N^A\eta^{LK}\eta_{IJ}f^J_{AK}+N^Af^L_{AI}\big)\Big)\omega^I_i
\end{align}
As the Lagrangian, the expression in terms of elementary functions is not easy to read. However, this calculation shows explicitly that the connection is linearly dependent on the left-invariant one-form, as one expects from (\ref{MaBoVa}).\\

Finally, one wants to find the dependence of $p^I_a$ and $\phi^a_I$ on the scale factors and Euler angles. Since the dependence of $A^a_i$ on the left-invariant one-form is made explicit in Eq.(\ref{connBianchi}), the components of the linear morphism $\phi$ can be easily written
\begin{align}
\nonumber
    \phi^a_I&=\frac{1}{2}\epsilon^{abc}\tfrac{a_{c}}{a_{b}}\Lambda^J_bR^c_Kf^K_{JI}-\frac{1}{4}\epsilon^{abc}\tfrac{1}{a_{b}a_{c}}\eta_{IJ}\Lambda^K_b\Lambda^L_c f^J_{LK}\\
    &+\frac{\gamma}{2N}a_{(a)}R^a_L\big(\eta^{LJ}\dot{\eta}_{JI}+N^A\eta^{LK}\eta_{IJ}f^J_{AK}+N^Af^L_{AI}\big).
\end{align}
While $p^I_a$ can be computed from the metric
$$h_{ij}=\Gamma_{ab}R^a_IR^b_J\omega^I_i\omega^J_j=|\mathrm{det}(p^I_a)|p^c_Ip^c_J\omega^I_i\omega^J_j$$
from which, one obtains
\begin{equation}
    \sqrt{h}=\bar e=|abc||\mathrm{det}(\omega^I_i)|=\sqrt{|\mathrm{det}(p^I_a)|}|\mathrm{det}(\omega^I_i)|
\end{equation}
Recalling that the formula for the densitized dreibein is
$$
    E^i_a=\sqrt{h}e^i_a=|\mathrm{det}(\omega^I_i)|\frac{|a_1a_2a_3|}{a_{(a)}}\Lambda^I_a\xi^i_I
$$
Hence, $p^I_a$ reads
\begin{equation}
\label{momenta}
    p^I_a=\mathrm{sgn}(a_{(a)})|a_ba_c|\Lambda^I_a\ \ \ \ \textrm{with $\epsilon_{abc}=1$.}
\end{equation}\\

The classical theory presents eight disconnected cases, one for each choice of signs for $a,b,c$, in fact, the eigenvalues of the metric cannot vanish. Furthermore, considering all three eigenvalues different from each other at any $t$, if they can be ordered in the whole interval then $\eta_{IJ}$ can be decomposed as in Eq.(\ref{mydecom}). Thus, without loss of generality, one can consider $a>b>c>0\ \forall\ t$, in the vierbein representation this choice means that the densitized dreibein has the same orientation as the left-invariant vectors.

\section{The Bianchi I case \label{sec:Bianchi1}}
In the non-diagonal Bianchi models, the Lagrangian and the connections have a long and difficult expression, so the study with respect to the scale factors and Euler angles is too complicated. The Bianchi I model provides a huge simplification of the formulae. In such a model many terms in the Lagrangian and the connection vanish and they come out very simplified.\\
Considering the Bianchi I model with a metric as in (\ref{mydecom}), the ADM Lagrangian is derived by (\ref{non-diagL}) and it reads
\begin{align}
\nonumber
    \mathcal{L}=&\frac{1}{4N}\sqrt{\mathrm{det}(\Gamma_{AB})}\bigg(\Gamma^{AC}\Gamma^{BD}\dot{\Gamma}_{AB}\dot{\Gamma}_{CD}\\ \nonumber
    &+2\Gamma^{AB}\Gamma_{CD}(R\dot\Lambda)^D_A(R\dot\Lambda)^C_B\\ 
    &+2(R\dot\Lambda)^B_C(R\dot\Lambda)^C_B-\Gamma^{IJ}\dot{\Gamma}_{IJ}\Gamma^{KL}\dot{\Gamma}_{KL}\bigg).
\end{align}
Notice that the Bianchi I hypothesis coincides with a gauge fixing in the ADM formalism: on a generic Bianchi model, imposing the vanishing of shift vector, results in the same Lagrangian. As seen before, the structure constant is always coupled with the shift vector, then, $f^K_{IJ}=0$ and $N^i=0$ remove the same terms. This is not valid for the connection, in which some terms depend on the structure constants only, which emerge from the spin connection.\\
Since the spin connection $\Gamma^a_i$ vanishes in the Bianchi I model, the connection has a really simple expression 
\begin{equation}
\label{conn1}
    \phi^a_I=\frac{\gamma}{2N}a_{(a)}R^a_L\eta^{LJ}\dot{\eta}_{JI}.
\end{equation}
However, its expression as a matrix function of scale factors and Euler angles remains difficult to read. Nevertheless, we can do some manipulation, considering the decomposition of $\eta$, we can rewrite the connection as
\begin{equation}
\label{sconn1}
    \phi^a_I=\frac{\gamma}{2Na_{(a)}}\Lambda^J_a\,\dot{\eta}_{JI}.
\end{equation}
Moreover, the diagonal case can be obtained considering $R^a_I=\delta^a_I$. From the previous expression, we get
$$\phi^a_I=\frac{\gamma}{N}\dot a_{(a)}\delta^a_I,$$
from which, in the isotropic case, given by $a_1=a_2=a_3=a$, the connection reads
$$\phi^a_I=\frac{\gamma}{N}\dot a\delta^a_I,$$
that is the same in \cite{ashtekar2011loop}.\\

Unlike the connection, we can write the Lagrangian explicitly in terms of the metric variables 
\begin{widetext}
\begin{align}
\nonumber
\label{Lag1}
    \mathcal{L}=&\frac{1}{2 N a b c}  \Big(-4 a b c  \left(a \dot b \dot c +b \dot a \dot c +c \dot a \dot b\right)-2 a^2 b^2 c^2 \left(2 \dot{\theta}\dot{\varphi} \cos \psi +\dot{\theta}^2+\dot{\psi}^2+\dot{\varphi}^2\right)\\ \nonumber
    &+a^4 \left(b^2 (\cos \theta\, \dot{\psi}+\sin \theta \sin \psi \,\dot{\varphi})^2+c^2 (\dot{\theta}+\cos \psi \,\dot{\varphi})^2\right)+b^4\left(a^2 (\cos \theta \sin \psi\, \dot{\varphi}-\sin \theta \,\dot{\psi})^2+c^2(\dot{\theta}+\cos \psi\, \dot{\varphi})^2\right)\\
    &+c^4\left(a^2(\cos \theta  \sin \psi\, \dot{\varphi}-\sin \theta\, \dot{\psi})^2+b^2(\cos \theta\,\dot{\psi}+\sin \theta \sin \psi \,\dot{\varphi})^2\right)\Big).
\end{align}
\end{widetext}
The reduction to the diagonal case can be implemented considering the Euler angles constant and null, thus, only the first term remains
$$\mathcal{L}^{\mathrm{diag}}=-\frac{2}{N} \left(a \dot b \dot c +b \dot a \dot c +c \dot a \dot b\right),$$
hence, for the isotropic case, the Lagrangian is proportional to the scalar curvature in the flat FLRW model, at least of a total derivative
$$\mathcal{L}^{\mathrm{iso}}=-\frac{6}{N} a \dot a^2.$$

From the Lagrangian $\mathcal{L}$ we can compute the momenta conjugate to the metric variables. Clearly, we have the momenta
\begin{align*}
    \nonumber
    &\frac{\partial\mathcal{L}_{ADM}}{\partial\dot a}=-\frac{2}{N}(c \dot b+b \dot c),\\
    &\frac{\partial\mathcal{L}_{ADM}}{\partial\dot b}=-\frac{2}{N}(c \dot a+a \dot c),\\
    &\frac{\partial\mathcal{L}_{ADM}}{\partial\dot c}=-\frac{2}{N} (b \dot a+a \dot b).
\end{align*}
The conjugate momenta to the scale factors are the usual ones of the diagonal case. Instead, the conjugate momenta to the angles have a slightly difficult expression. 
\begin{align*}
    &\frac{\partial\mathcal{L}_{ADM}}{\partial\dot{\theta}}=\frac{c}{N a b}(a^2-b^2)^2 (\dot{\theta}+\cos\psi \,\dot{\varphi}),\\ 
    &\frac{\partial\mathcal{L}_{ADM}}{\partial\dot{\psi}}=\frac{1}{N a b c}\bigg(a^4b^2\big(\cos^2\theta\,\dot{\psi}+\sin\theta\cos\theta\sin\psi\,\dot{\phi}\big)\\
    &\ \ \ \ \ \ \ \ \ \ \ \ \ \ \ \ \ \ \ \ \ +b^4a^2\big(\sin^2\theta\,\dot{\psi}-\sin\theta\cos\theta\sin\psi\,\dot{\phi}\big)\\
    &\ \ \ \ \ \ \ \ \ \ \ \ \ \ \ \ \ \ \ \ \ +c^4\Big(a^2\big(\sin^2\theta\,\dot{\psi}-\sin\theta\cos\theta\sin\psi\,\dot{\phi}\big)\\
    &\ \ \ \ \ \ \ \ \ \ \ \ \ \ \ \ \ \ \ \ \ \ \ \ +b^2\big(\cos^2\theta\,\dot{\psi}+\sin\theta\cos\theta\sin\psi\,\dot{\phi}\big)\Big)+\\
    &\ \ \ \ \ \ \ \ \ \ \ \ \ \ \ \ \ \ \ \ \ -2a^2b^2c^2\,\dot{\psi}\bigg),\\
    &\frac{\partial\mathcal{L}_{ADM}}{\partial\dot{\varphi}}=\frac{1}{N a b c}  \bigg(a^4 \Big(b^2 \sin\theta \sin\psi (\cos\theta \,\dot{\psi}+\sin\theta \sin\psi \,\dot{\varphi})\\
    &\ \ \ \ \ \ \ \ \ \ \ \ \ \ \ \ \ \ \ \ \ \ \ \ \ \ \ \ \ \ \ +c^2 \cos\psi (\dot{\theta}+\cos\psi \,\dot{\varphi})\Big)\\ 
    &\ \ \ \ \ \ \ \ \ \ \ \ \ \ \ \ \ \ +b^4\Big( a^2\cos\theta \sin\psi (\cos\theta \sin\psi \,\dot{\varphi}-\sin\theta \,\dot{\psi})\\
    &\ \ \ \ \ \ \ \ \ \ \ \ \ \ \ \ \ \ \ \ \ \ \ +c^2 \cos\psi (\dot{\theta}+\cos\psi \,\dot{\varphi})\Big)\\ 
    &\ \ \ \ \ \ \ \ \ \ \ \ \ \ \ \ \ \ +c^4\Big(a^2 \cos\theta \sin\psi (\cos\theta \sin\psi \,\dot{\varphi}-\sin\theta \,\dot{\psi})\\ 
    &\ \ \ \ \ \ \ \ \ \ \ \ \ \ \ \ \ \ \ \ \ \ \ +b^2 \sin\theta \sin\psi (\cos\theta \,\dot{\psi}+\sin\theta \sin\psi \,\dot{\varphi})\Big)+\\
    &\ \ \ \ \ \ \ \ \ \ \ \ \ \ \ \ \ \ \ -2 a^2 b^2 c^2 (\dot{\theta} \cos\psi+\,\dot{\varphi})\bigg).
\end{align*}

\subsection{Contraints and their algebra for the Bianchi I model \label{sec:const1}}
The Bianchi I model simplifies the expression of the connection with respect to the metric variables. Furthermore, the constraints referred to the Ashtekar's variables, presented in \cite{bojowald2000loopI} for a homogeneous model, read more simply
\begin{align}
\nonumber
    &G_a=\epsilon_{ab}^{\ \ c}\phi^b_I p^I_c,\\
    &\mathcal{D}_I=G_b\phi^b_I,\\ \nonumber
    &\mathcal{S}=-\frac{1}{\gamma^2|\mathrm{det}(p^K_c)|}\Big(p^I_a\phi^a_I p^J_b\phi^b_J-p^I_a\phi^a_J p^J_b\phi^b_I\Big),
\end{align}
where $G_a$ and $\mathcal{S}$ are the Gauss and the scalar constraint respectively. Notice that the Diffeomorphism constraint $\mathcal{D}_I$ does not play any role. It is a linear combination of the Gauss constraint, then it weakly vanishes $\mathcal{D}_I\approx0$. In general, the Diffeomorphism constraint is not relevant for homogeneous models \cite{bojowald2013mathematical}.\\
The algebra of the constraints is easy to find in the phase space $(\phi^I_a,p^b_J)$. Despite $\mathcal{D}_I$ being a linear combination of the other constraints, it can give a non-trivial contribution to the algebra of the constraints. It is convenient to introduce a new scalar constrain $\mathcal{S}'=\gamma^2|\mathrm{det}(p^K_c)|\mathcal{S}$, it is well define because $|\mathrm{det}(p^K_c)|$ is always strictly greater than zero.\\
Furthermore, the Poisson brackets of the constraint with the phase space variables read
\begin{align*}
    &\{G_a,p^I_b\}=k\gamma'\,\epsilon_{ab}^{\ \ c}p^I_c\\
    &\{G_a,\phi^b_I\}=k\gamma'\,\epsilon_{abc}\phi^c_I\\
    &\{\mathcal{D}_I,p^J_a\}=k\gamma'\,\epsilon_{ba}^{\ \ c}\phi^b_Ip^J_c+k\gamma'G_a\delta^J_I\\
    &\{\mathcal{D}_I,\phi^a_J\}=k\gamma'\,\epsilon_{bac}\phi^b_I\phi^c_J\\
    &\{\mathcal{S}',p^I_a\}=-2k\gamma'\Big(p^I_a p^J_b\phi^b_J-p^J_a p^I_b\phi^b_J\Big)\\
    &\{\mathcal{S}',\phi^a_I\}=2k\gamma'\left(\phi^a_I p^J_b\phi^b_J-\phi^a_J p^J_b\phi^b_I\right)
\end{align*}
where $k=\frac{8\pi G}{c^3}$ is the gravitational constant and $\gamma'=\gamma/V_0$. One can compute the Poisson brackets between the constraints. For the Gauss constraint, we get
\begin{align*}
    \{G_a,G_b\}&=k\gamma'\,\epsilon_{bc}^{\ \ d}\left(\epsilon_{ace}\phi^e_Ip^I_d+\epsilon_{ad}^{\ \ e}p^I_e\phi^c_I\right)\\
    &=k\gamma'\left(\delta^c_b\delta_{ad}-\delta^c_a\delta_{db}\right)\phi^d_Ip^I_c=k\gamma'\,\epsilon^c_{de}\epsilon^e_{\ ba}\phi^d_Ip^I_c,
\end{align*}
while for the scalar constraint, we obtain
\begin{align*}
    \{G_a,\mathcal{S}'\}=&k\gamma'\,\epsilon_{ab}^{\ \ c}p^I_c\left(2\phi^b_I p^J_d\phi^d_J-2\phi^b_J p^J_d\phi^d_I\right)+\\
    &-k\gamma'\,\epsilon_{ab}^{\ \ c}\phi^b_I\left(2p^I_c p^J_d\phi^d_J-2p^J_c p^I_d\phi^d_J\right)\\
    =&2k\gamma'G_ap^J_d\phi^d_J-2k\gamma'\,\epsilon_{ab}^{\ \ c}p^I_c\phi^b_J p^J_d\phi^d_I+\\
    &-2k\gamma'G_ap^J_d\phi^d_J+2k\gamma'\,\epsilon_{ab}^{\ \ c}\phi^b_Ip^J_c p^I_d\phi^d_J=0.
\end{align*}
The Poisson brackets with $\mathcal{D}_I$ are proportional to the other constraints, then, at least they weakly vanish.\\
Hence, the algebra generated from the Gauss constraint and the scalar constraint reads
\begin{align}
\nonumber
    &\{G_a,G_b\}=k\gamma'\,\epsilon_{abc}G_c\approx0\\ \nonumber
    &\{G_a,\mathcal{D}_I\}=0\\
    &\{G_a,\mathcal{S}'\}=0\\ \nonumber
    &\{\mathcal{D}_I,\mathcal{S}'\}=G_b\{\phi^b_I,\mathcal{S}'\}\approx0.
\end{align}
The algebra is closed and all the constraints are first-class. It is also evident from the Poisson brackets that the Gauss constraint is the generator of the gauge transformation: it rotates the internal index with the $\mathfrak{su}(2)$ structure constant, while it leaves invariant the scalar quantities and the coordinate indices $I$.\\

Using Eq.(\ref{conn1}) e (\ref{momenta}), the constraints can be written in terms of metric variables. For the scalar constraint, we obtain
\begin{widetext}
\begin{align}
\nonumber
    \mathcal{S}^{\mathrm{ph}}&=\frac{1}{2 N^2 a b c}\Big(-4 a b c  \left(a \dot b \dot c +b \dot a \dot c +c \dot a \dot b\right)-2 a^2 b^2 c^2 \left(2 \dot{\theta}\dot{\varphi} \cos \psi +\dot{\theta}^2+\dot{\psi}^2+\dot{\varphi}^2\right)\\ \nonumber
    &+a^4 \left(b^2 (\cos \theta\, \dot{\psi}+\sin \theta \sin \psi \,\dot{\varphi})^2+c^2 (\dot{\theta}+\cos \psi \,\dot{\varphi})^2\right)+b^4\left(a^2 (\cos \theta \sin \psi\, \dot{\varphi}-\sin \theta \,\dot{\psi})^2+c^2(\dot{\theta}+\cos \psi\, \dot{\varphi})^2\right)\\
    &+c^4\left(a^2(\cos \theta  \sin \psi\, \dot{\varphi}-\sin \theta\, \dot{\psi})^2+b^2(\cos \theta\,\dot{\psi}+\sin \theta \sin \psi \,\dot{\varphi})^2\right)\Big),
\end{align}
\end{widetext}
while the Gauss constraint vanishes identically
\begin{equation}
    G_a^{\mathrm{ph}}=0.
\end{equation}
The vanishing of the Gauss constraint implies, due to the Theorem $I.2.1$ in \cite{Thiemann:2002nj}, that the other constraints are exactly the ADM ones. One can check that $\mathcal{S}^{\mathrm{ph}}$ is the super-Hamiltonian $\mathcal{H}$ obtains from the ADM Lagrangian in (\ref{Lag1}).\\
The vanishing of the Gauss constraint is due to the set of variables chosen, the set $\{a,b,c,\theta,\psi,\varphi\}$ is a set of metric variables, while the Gauss constraint is associated with a gauge transformation. Thus, without the gauge freedom, as in this case, the Gauss constraint vanishes identically.

\section{Quantization of the non-diagonal Bianchi I model \label{sec:quan1}}
The non-diagonal Bianchi I model is strongly linked to the diagonal one. In fact, we will show that the geometrical information is contained in some ``diagonal" quantities and the quantization of the non-diagonal model follows the one proposed in \cite{ashtekar2009loop} for diagonal Bianchi I models. These considerations suggest that the angles do not play any role in the kinematical quantization. We can find a justification in the classical description: a constant rotation of the left-invariant vectors does not affect the description; in fact, a linear combination of left-invariant vectors is a left-invariant vector and it is always possible to lead back to a diagonal metric, i.e. the rotation can be absorbed in $\omega^I$. Considering a time dependant rotation, the linear combination remains in the Lie algebra of the homogeneous space and for each fixed time the previous consideration holds, hence, in each surface at a fixed time it is possible to have a diagonal metric. At the quantum level, it means that the angles do not affect the kinematics and they contribute only to the dynamics of the theory. Thus, it is possible to use the kinematical Hilbert space of the diagonal theory as the core of the Hilbert space that describes the kinematics of the non-diagonal theory.

\subsection{Holonomy operator \label{sec:holo}}
The holonomy is the fundamental operator of the approach with loops. In LQC, the peculiar symmetries of the holonomy due to a simpler phase space are useful to quantize the Hamiltonian and they enable to define of a suitable kinematical Hilbert space that renounces to the $SU(2)$ construction to adopt an abelian one. We want to verify if the properties of the pointwise homogeneous holonomy, studied in the diagonal case, hold in the non-diagonal models.\\
First of all, we want to find the usual expression of the holonomy in terms of trigonometric functions as in Eq.(3.1) in \cite{ashtekar2009loop}. Considering the holonomy along an edge $e_I$ with length $l_I$
\begin{equation}
    \exp\left(\int_{e_{I}} ds\,\dot c^i A^a_i\tau_a\right)=\exp\left(l_I\phi^a_I\tau_a\right).
\end{equation}
In order to do the expansion, the Taylor series of the exponential is considered
\begin{align}
\nonumber
    \exp\left(l_I\phi^a_I\tau_a\right)&=\sum_{n=0}^{\infty}\frac{1}{n!}(l_I\phi^a_I\tau_a)^n\\ \nonumber
    &=\sum_{n=0}^{\infty}\frac{1}{(2n)!}(l_I\phi^a_I\tau_a)^{2n}\\
    &\ \ \ \ +\sum_{n=0}^{\infty}\frac{1}{(2n+1)!}(l_I\phi^a_I\tau_a)^{2n+1}.
\end{align}
To proceed we need to evaluate the square of the argument
\begin{align*}
    (l_I\phi^a_I\tau_a)^2&=l_I^2\phi^a_I\tau_a\phi^b_I\tau_b=l_I^2\phi^a_I\phi^b_I\left(-\tfrac{1}{4}\delta_{ab}+\tfrac{1}{2}\epsilon_{abc}\tau_c\right)\\
    &=-\frac{l_I^2}{4}\frac{\gamma^2}{4N^2}\frac{\delta_{ab}}{c_a c_b}\Lambda^K_a\Lambda^L_b\dot{\eta}_{KI}\dot{\eta}_{LI}\\
    &=-\frac{l_I^2}{4}\frac{\gamma^2}{4N^2}\eta^{KL}\dot{\eta}_{KI}\dot{\eta}_{LI}.
\end{align*}
It is useful to define $c_{II}=\frac{\gamma}{2N}\sqrt{\eta^{KL}\dot{\eta}_{KI}\dot{\eta}_{LI}}$. In such a way the argument of the series reads
\begin{align*}
    &(l_I\phi^a_I\tau_a)^{2n}=\left(-\tfrac{1}{4}l_I^2c_{II}^2\right)^n=(-1)^n\left(\tfrac{1}{2}l_Ic_{II}\right)^{2n}\\
    &(l_I\phi^a_I\tau_a)^{2n+1}=(-1)^n\left(\tfrac{1}{2}l_Ic_{II}\right)^{2n}(l_I\phi^a_I\tau_a)\\
    &\ \ \ \ \ \ \ \ \ \ \ \ \ \ \ \ =(-1)^n\left(\tfrac{1}{2}l_Ic_{II}\right)^{2n+1}\frac{(2\phi^a_I\tau_a)}{c_{II}}.
\end{align*}
Hence, the Taylor series reads
\begin{align}
\label{nondexp}
\nonumber
     \exp\left(l_I\phi^a_I\tau_a\right)&=\sum_{n=0}^{\infty}\frac{1}{(2n)!}(l_I\phi^a_I\tau_a)^{2n}\\ \nonumber
     &\ \ \ \ +\sum_{n=0}^{\infty}\frac{1}{(2n+1)!}(l_I\phi^a_I\tau_a)^{2n+1}\\ \nonumber
     &=\sum_{n=0}^{\infty}\frac{(-1)^n}{(2n)!}\left(\tfrac{1}{2}l_Ic_{II}\right)^{2n}\\ \nonumber
     &\ \ \ \ +\sum_{n=0}^{\infty}\frac{(-1)^n}{(2n+1)!}\left(\tfrac{1}{2}l_Ic_{II}\right)^{2n+1}\frac{(2\phi^a_I\tau_a)}{c_{II}}\\ 
     &=\cos\left(\tfrac{1}{2}l_Ic_{II}\right)+\frac{(2\phi^a_I\tau_a)}{c_{II}}\sin\left(\tfrac{1}{2}l_Ic_{II}\right).
\end{align}
It is evident that the issue is the factor of the sine, which is not present in the isotropic and diagonal cases. However, we can check easily that, if $\phi^a_I=c_{(a)}\delta^a_I$, then $c_{II}=c_I$, where $c_I$ are the diagonal variables (cf. \ref{sec:loop1}). Thus, we find the usual expression $\exp\left(l_I\phi^a_I\tau_a\right)=\cos\left(\tfrac{1}{2}l_Ic_I\right)+2\tau_I\sin\left(\tfrac{1}{2}l_Ic_I\right)$ presented in \cite{ashtekar2009loop}.\\

With this expansion, we can derive some properties of the pointwise holonomy in the non-diagonal Bianchi I model. First of all, the commutation relations are enounced in the following theorem. 
\begin{mytheo}
\label{theocomm}
Considering $SU(2)$ represented in its fundamental representation, i.e $2\times2$ unitary matrices associated to spin $j=\frac{1}{2}$, the usual matrix commutator reads
\begin{align*}
    [&\exp\left(l_I\phi^a_I\tau_a\right),\exp\left(l_J\phi^a_J\tau_a\right)]_{SU(2)\subset M(2,\mathbb{C})}=\\
    &=\frac{4}{c_{II}c_{JJ}}F^c_{IJ}\tau_c\sin\left(\tfrac{1}{2}l_Ic_{II}\right)\sin\left(\tfrac{1}{2}l_Jc_{JJ}\right)
\end{align*}
\end{mytheo}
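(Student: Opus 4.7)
The plan is to start from the closed-form trigonometric expansion (\ref{nondexp}) for each holonomy and compute the product directly, since in the fundamental representation of $SU(2)$ everything reduces to two-by-two matrix arithmetic. Writing $\cos_I := \cos(\tfrac{1}{2} l_I c_{II})$ and $\sin_I := \sin(\tfrac{1}{2} l_I c_{II})$, and similarly for $J$, each holonomy decomposes as
\begin{equation*}
    U_I := \exp(l_I \phi^a_I \tau_a) = \cos_I + \frac{2\sin_I}{c_{II}}\phi^a_I\tau_a,
\end{equation*}
i.e.\ a scalar multiple of the identity plus a Lie-algebra element.

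Next, I would expand both $U_I U_J$ and $U_J U_I$ as four-term products. Three of the four terms, namely the scalar piece $\cos_I\cos_J$, the piece $\frac{2\sin_I\cos_J}{c_{II}}\phi^a_I\tau_a$, and the piece $\frac{2\cos_I\sin_J}{c_{JJ}}\phi^b_J\tau_b$, are manifestly symmetric under the swap that exchanges the order of the two factors (the two scalar coefficients commute, and the single $\tau$-matrix stands alone in each). They therefore cancel in the commutator, and only the genuinely bilinear generator term survives, giving
\begin{equation*}
    [U_I, U_J] = \frac{4\sin_I\sin_J}{c_{II}c_{JJ}}\,\phi^a_I \phi^b_J\,[\tau_a, \tau_b].
\end{equation*}

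Finally, I would invoke the $\mathfrak{su}(2)$ commutation relation $[\tau_a, \tau_b] = \epsilon_{ab}^{\ \ c}\tau_c$ in the fundamental representation and identify $F^c_{IJ} := \phi^a_I \phi^b_J\,\epsilon_{ab}^{\ \ c}$, the antisymmetrised product of the two connection components, which plays the role of an abelian field strength evaluated pointwise. Substituting back, one reads off the announced formula immediately. There is no real obstacle to this argument: all the analytic work has been absorbed into the expansion (\ref{nondexp}), and what remains is careful bookkeeping. The only mild subtlety worth checking is that the three symmetric-in-$(I,J)$ contributions cancel exactly — including in their assignment of the $\sin_I\cos_J$ versus $\cos_I\sin_J$ prefactors — so that the surviving antisymmetric contribution cleanly produces the sine–sine structure with the stated overall coefficient $\tfrac{4}{c_{II}c_{JJ}}$.
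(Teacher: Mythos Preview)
Your proof is correct and follows essentially the same route as the paper: both start from the expansion (\ref{nondexp}), discard the identity-proportional pieces since they commute with everything, and reduce the computation to the $\mathfrak{su}(2)$ bracket $[\tau_a,\tau_b]=\epsilon_{abc}\tau_c$ applied to the surviving bilinear term. The only cosmetic difference is that the paper phrases the cancellation as ``terms proportional to the identity are neglected,'' whereas you spell out the symmetry of each of the three mixed terms explicitly.
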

\begin{proof}
Fixed the representation $j=\frac{1}{2}$ means that the generators of the algebra are proportional to the Pauli matrices $\tau_a=-\frac{i}{2}\sigma_a$, then the commutator defined in $M(2,\mathbb{C})$ still holds. The calculation follows from the expansion (\ref{nondexp}), where the terms proportional to the identity are neglected because they do not contribute to the commutator
\begin{align*}
    [&\exp\left(l_I\phi^a_I\tau_a\right),\exp\left(l_J\phi^a_J\tau_a\right)]_{M(2,\mathbb{C})}=\\
    &=\left[\frac{(2\phi^a_I\tau_a)}{c_{II}}\sin\left(\tfrac{1}{2}l_Ic_{II}\right),\frac{(2\phi^b_J\tau_b)}{c_{JJ}}\sin\left(\tfrac{1}{2}l_Jc_{JJ}\right)\right]_{M(2,\mathbb{C})}\\
    &=\frac{(2\phi^a_I)}{c_{II}}\sin\left(\tfrac{1}{2}l_Ic_{II}\right)\frac{(2\phi^b_J)}{c_{JJ}}\sin\left(\tfrac{1}{2}l_Jc_{JJ}\right)\left[\tau_a,\tau_b\right]_{\mathfrak{su}(2)}\\
    &=\frac{4}{c_{II}c_{JJ}}\epsilon_{abc}\phi^a_I\phi^b_J\tau_c\sin\left(\tfrac{1}{2}l_Ic_{II}\right)\sin\left(\tfrac{1}{2}l_Jc_{JJ}\right)
\end{align*}
\end{proof}
This is a completely generic formula valid for any homogeneous holonomy in Bianchi I. It is evident that, for $I=J$, the skew-symmetry of $F^c_{IJ}$ assures the vanishing of the commutator of holonomy along the same edge. Moreover, it holds also in the diagonal case in a slightly simplified expression
\begin{align*}
    [&\exp\left(l_Ic_I\tau_I\right),\exp\left(l_Jc_J\tau_J\right)]=\\
    &=4\,\epsilon_{IJk}\tau_k\sin\left(\tfrac{1}{2}l_Ic_{I}\right)\sin\left(\tfrac{1}{2}l_Jc_{J}\right).
\end{align*}
Instead, the isotropic case is a different one. Due to the isotropy, one is unable to distinguish different directions of the space, then the holonomy along each edge must commute each other and the commutator always vanishes. Thus, the abelian approach in the isotropic case finds its reason in the group of isotropic holonomies that is abelian, further the identical null Gauss constraint.\\

In an analogous way, it is possible to derive another formula for the commutation property. It was originally derived by M.Bojowald in \cite{bojowald2000loopII} for the diagonal case, but it is possible to generalize it to any homogeneous connection and to prove that it holds for $SU(2)$.
\begin{mylemma}
\label{lemma1}
Let $g,h\in SU(2)$ in the fundamental representation. Then
\begin{equation}
    gh=hg+h^{-1}g+hg^{-1}-tr(hg^{-1})
\end{equation}
\end{mylemma}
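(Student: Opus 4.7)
The plan is to reduce the identity to a purely algebraic fact about $2\times 2$ matrices of determinant one, via the Cayley--Hamilton theorem. For any $A\in SU(2)$ in the fundamental representation, the characteristic polynomial reads $\lambda^{2}-\mathrm{tr}(A)\lambda+1=0$, so
\begin{equation*}
A^{-1}=\mathrm{tr}(A)\,\mathbb{1}-A.
\end{equation*}
I would apply this to both $h$ and $g$ in order to rewrite the ``inverse'' terms appearing on the right-hand side of the claimed formula. Explicitly, $h^{-1}g=\mathrm{tr}(h)\,g-hg$ and $hg^{-1}=\mathrm{tr}(g)\,h-hg$, so the whole right-hand side collapses to
\begin{equation*}
hg+\mathrm{tr}(h)\,g-hg+\mathrm{tr}(g)\,h-hg-\mathrm{tr}(hg^{-1})\,\mathbb{1},
\end{equation*}
which simplifies to $\mathrm{tr}(h)\,g+\mathrm{tr}(g)\,h-hg-\mathrm{tr}(hg^{-1})\,\mathbb{1}$. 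Thus the whole statement is equivalent to showing
\begin{equation*}
gh+hg=\mathrm{tr}(h)\,g+\mathrm{tr}(g)\,h-\mathrm{tr}(hg^{-1})\,\mathbb{1}.
\end{equation*}

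The next step is to eliminate $\mathrm{tr}(hg^{-1})$ in favour of quantities involving only $g$, $h$, and $gh$. Using $g^{-1}=\mathrm{tr}(g)\,\mathbb{1}-g$ and linearity of the trace gives $\mathrm{tr}(hg^{-1})=\mathrm{tr}(g)\mathrm{tr}(h)-\mathrm{tr}(hg)$. Substituting this back, what remains to be proved is the ``polarised Cayley--Hamilton'' identity
\begin{equation*}
gh+hg=\mathrm{tr}(g)\,h+\mathrm{tr}(h)\,g+\bigl(\mathrm{tr}(gh)-\mathrm{tr}(g)\mathrm{tr}(h)\bigr)\,\mathbb{1}.
\end{equation*}
This is a standard $2\times2$ matrix identity: I would obtain it by polarising $A^{2}=\mathrm{tr}(A)\,A-\mathbb{1}$, that is, applying Cayley--Hamilton to $A=g+h$, subtracting the Cayley--Hamilton relations for $g$ and $h$, and using $\det(g+h)-\det g-\det h=\mathrm{tr}(g)\mathrm{tr}(h)-\mathrm{tr}(gh)$ (valid for any pair of $2\times2$ matrices).

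Once that scalar--plus--matrix identity is in hand, working backwards through the chain of substitutions immediately yields the lemma. The expected main obstacle is purely bookkeeping: keeping the signs and the placement of traces straight, and making sure the polarisation of $\det$ in the intermediate step is correctly applied. No subtlety about $SU(2)$ enters beyond the single fact $\det=1$, which is exactly what makes the Cayley--Hamilton relation take the clean form used above; the identity would in fact hold for any two elements of $SL(2,\mathbb{C})$ in the defining representation.
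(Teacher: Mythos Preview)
Your proof is correct and is a genuinely different route from the paper's. The paper identifies $SU(2)$ with the unit quaternions via $\begin{pmatrix}\alpha & \beta\\ -\bar\beta & \bar\alpha\end{pmatrix}\mapsto\alpha+\beta j$, writes $q_1=a+bi+cj+dk$, $q_2=w+xi+yj+zk$, and verifies the identity by a direct component computation, using that $\mathrm{tr}(g)=2\,\Re(F(g))$ and $g^{-1}\leftrightarrow\bar q$ under the isomorphism. Your argument instead stays entirely in the matrix algebra: the single input is Cayley--Hamilton in the form $A^{-1}=\mathrm{tr}(A)\mathbb{1}-A$ for $\det A=1$, from which you reduce the claim to the polarised identity $gh+hg=\mathrm{tr}(g)h+\mathrm{tr}(h)g+(\mathrm{tr}(gh)-\mathrm{tr}(g)\mathrm{tr}(h))\mathbb{1}$ and then obtain that by polarising Cayley--Hamilton at $A=g+h$. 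Your approach is cleaner and, as you note, immediately shows the result holds for all of $SL(2,\mathbb{C})$; the paper's quaternion computation is more hands-on and tied to $SU(2)$, but has the minor virtue of making the link $\mathrm{tr}\leftrightarrow 2\,\Re$ explicit, which the paper reuses elsewhere.
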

\begin{proof}
There are two ways to prove the lemma: using (\ref{nondexp}) or with the quaternions. To prove it for the whole $SU(2)$ the quaternions give a more general formalism.\\
Considering the isomorphism between $SU(2)$ and the unit quaternions 
\begin{align*}
    F:&SU(2)\to\mathbb{H}\\
    &\begin{pmatrix}
      \alpha & \beta\\
      -\bar{\beta} & \bar{\alpha}
    \end{pmatrix}\mapsto q=\alpha+\beta j\ \ \ \ \ \alpha,\beta\in\mathbb{C}
\end{align*}
Recalling the quaternions algebra $ij=k,jk=i,ki=j$ and $i^2=j^2=k^2=ijk=-1$. It is possible to define the complex conjugation as the change of the sign of $i,j,k$ and for unit quaternions $q^{-1}=\bar{q}$. Notice that, using the fundamental representation of $SU(2)$, there is a link between trace and real part: $tr(g)=2\Re(F(g))=2\Re(\alpha)$.\\
Thus, it is enough to prove the formula for the unit quaternions, so the problem is reduced to doing some simple algebra. Let $q_1=a+bi+cj+dk$ and $q_2=w+xi+yj+zk$, the single terms are computed
\begin{align*}
    &q_1q_2-q_2q_1=2(cz-dy)i+2(dx-bz)j+2(by+cx)k\\
    &q_2^{-1}q_1=(aw+bx+cy+dz)+(bw-ax+cz-dy)i\\
    &\ \ \ \ \ \ \ \ \ \ +(cw-ay+dx-bz)j+(dw-az+by-cx)k\\
    &q_2q_1^{-1}=(aw+bx+cy+dz)+(-bw+ax+cz-dy)i\\
    &\ \ \ \ \ \ \ \ \ \ +(ay-cw+dx-bz)j+(az-dw+by-cx)k\\
\end{align*}
Where the real parts are evident. From which
\begin{align*}
    q_1&q_2-q_2q_1-q_2^{-1}q_1-q_2q_1^{-1}=\\
    &=-2(aw+bx+cy+dz)=-2\Re(q_2q_1^{-1}).
\end{align*}
Where $\Re$ indicates the real part.
\end{proof}
Using the expansion (\ref{nondexp}), after some tedious calculations, we find that the formula by M.Bojowald is a subcase of this Lemma. This can be proved by the comparison between the formulas. There is no evident symmetry in Bojowald's hypothesis for which $tr(hg)=tr(hg^{-1})$, but it is easy to check this equivalence from the expansion in the diagonal case
\begin{align*}
     &tr(hg^{\pm1})=tr(e^{a\tau_I}e^{\pm b\tau_J})=\\
     &=2\cos(a/2)\cos(b/2)\mp2\delta_{IJ}\sin(a/2)\sin(b/2)\ \ a,b\in\mathbb{R}.
\end{align*}
Since $I\neq J$ is required in Bojowald's Lemma, the sign does not contribute.\\

The curvature operator follows the usual definition of the canonical LQC but the choice of the plaquette must be different. Recalling that
\begin{equation}
    F_{ij}^a=2\lim_{Ar_{\Box_{IJ}}\to0}\left(\frac{h_{\Box_{IJ}}-1}{Ar_{\Box_{IJ}}}\tau_a\right)\omega^I_i\omega^J_j
\end{equation}
where $\Box_{IJ}$ is the plaquette, i.e. a rectangular closed path with the edges along $\xi_I$ and $\xi_J$. The limit does not exist, so we need to consider the plaquette with the minimal area, as in \ref{sec:loop1}. We want to emulate the procedure for the diagonal model and find a natural choice of the plaquette from which to compute the curvature.\\
In the non-diagonal case, $p^I_a$ is not diagonal but with a correct choice of the plaquette, we can bypass the problem and find three fluxes that are the same as the diagonal case. Considering the vectors
\begin{equation}
    \zeta_a=\Lambda^I_a\xi_I,
\end{equation}
they are elements of the Lie algebra of $\Sigma$ and it is easy to check that $[\zeta_a,\zeta_b]=0$. Moreover, along these vectors, the densitized dreinbein vectors are diagonal
\begin{equation}
     E^i_a=|\mathrm{det}(\omega^I_i)|\frac{|a_1a_2a_3|}{a_{(a)}}\Lambda^I_a\xi^i_I=|\mathrm{det}(\omega^I_i)|p_{(a)}\zeta^i_a,
\end{equation}
where the expression of $p_a$ can be derived by Eq.(\ref{momenta})
\begin{equation}
    p_a=\mathrm{sgn}(a_{a})|a_ba_c|\ \ \ \ \textrm{with $\epsilon_{abc}=1$.}
\end{equation}
Hence, since the choice of the plaquette for the regularization of the curvature is completely arbitrary, we choose the plaquette $\Box_{ab}$ as a rectangular closed path with edges along $\zeta_a$ and $\zeta_b$, that lies in the plane $I$-$J$ rotated.\\
The pointwise holonomy along the edge $\zeta_a$ can be easily computed
\begin{align}
    \exp\left(\int_{\zeta_a}ds\,\dot c^iA^a_i\tau_a\right)=\exp\left(l_a\zeta_a^i\phi^b_I\omega^I_i\tau_b\right)=\exp\left(\lambda_a\mathring{\phi}^b_a\tau_b\right)
\end{align}
where $\mathring{\phi}^b_a$ is defined as $\mathring{\phi}^b_a=L_a\Lambda^I_a\phi^b_I$. $L_a$ is the fiducial length of the $a$-th edge of the rotated fiducial cell, which is equal to the canonical fiducial cell $L_a=L_I$ if $\delta_{aI}=1$. In fact, the fiducial metric is invariant
\begin{equation}
    \left(\prescript{0}{}{q}^{-1}\right)^{ij}=\delta^{IJ}\xi^i_I\xi^j_J=\delta^{IJ}R^a_I\zeta_a^iR^b_J\zeta^j_b=\delta^{ab}\zeta^i_a\zeta^j_b.
\end{equation}
Notice that $\mathring{\phi}^b_a$ has an explicit expression in terms of metric variables
\begin{equation}
   \mathring{\phi}^b_a=\frac{\gamma L_{(a)}}{2Na_{(b)}}\Lambda^J_b\Lambda^I_a\,\dot{\eta}_{JI}.
\end{equation}
Writing $\mathring{\phi}^b_a$, an interesting property becomes evident
\begin{equation}
   \mathring{\phi}^b_a= \left(
\begin{array}{ccc}
 L_1c_1 & \frac{\gamma L_2}{2N}\frac{b^2-a^2}{a}\omega_3 & \frac{\gamma L_3}{2N}\frac{a^2-c^2}{a}\omega_2\\
 \frac{\gamma L_1}{2N}\frac{b^2-a^2}{b}\omega_3 & L_2c_2 & \frac{\gamma L_3}{2N}\frac{c^2-b^2}{b}\omega_1\\
 \frac{\gamma L_1}{2N}\frac{a^2-c^2}{c}\omega_2 & \frac{\gamma L_2}{2N}\frac{c^2-b^2}{c}\omega_1 & L_3 c_3
\end{array}
\right)
\end{equation}
where $c_1,c_2,c_3$ are the diagonal components of the connection as in the diagonal case
\begin{equation}
    c_a=\frac{\gamma}{N}\frac{da_a}{dt},
\end{equation}
and $\omega_1,\omega_2,\omega_3$ are defined as
\begin{align*}
    &\omega_1=\cos\theta\sin\psi\,\dot{\varphi}-\sin\theta\,\dot{\psi},\\
    &\omega_2=\sin\theta\sin\psi\,\dot{\varphi}+\cos\theta\,\dot{\psi},\\
    &\omega_3=\dot{\theta}+\cos\psi\,\dot{\varphi}.
\end{align*}
The connection is composed of a diagonal part that is exactly the connection of the diagonal case and out-of-diagonal terms, which depend on angles and linearly in their conjugate momenta. Notice that for a constant rotation the connection is diagonal. Thus, since the holonomies are diagonal, one leads back to the diagonal case for both the kinematical and the dynamical theory.

\subsection{Quantum geometry \label{sec:QuaGeo}}
In the previous section, with a suitable choice of the plaquette, the link with the diagonal case emerges. The role of the diagonal variables in quantum geometry of the non-diagonal Bianchi I model will be examined in depth.\\
With the choice of the plaquette $\Box_{ab}$ as above, one of the densitized dreibein vectors is orthogonal to the plaquette and its diagonal component can represent the area, so the argument for the diagonal case can be emulated also in the non-diagonal one. Moreover, the flux of the electric field over the face $\tilde{\sigma}_a$ of the rotated fiducial cell is proportional to $p_a$.\\ 
Recalling that the fiducial length is invariant under rotation of the edges, let's introduce new variables via rescaling $\tilde{p}_a=L_bL_cp_a$ with $\epsilon_{abc}=1$, to be coherent with the notation in \ref{sec:loop1}. 
\begin{mylemma}
\label{area}
    The area of $\Tilde{\sigma}_a$ is $|\Tilde{p}_a|$
\end{mylemma}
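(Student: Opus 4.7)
The plan is to compute the area of $\tilde{\sigma}_a$ directly from the induced physical metric on that face, and then match the result against the expression for $\tilde{p}_a$ obtained by rescaling $p_a$ of Eq.~(\ref{momenta}) by $L_b L_c$.

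First, I would introduce adapted coordinates $y^a$ on $\Sigma$ defined by $\zeta_a = \partial/\partial y^a$, or explicitly $y^a = R^a_I x^I$. Since the fiducial metric is invariant under the rotation $\Lambda$ (as noted just above the lemma), the rotated fiducial cell in these coordinates is the rectangular box $y^a \in [0, L_a]$, and the face $\tilde{\sigma}_a$ is the constant-$y^a$ slice with $0 \le y^b \le L_b$ and $0 \le y^c \le L_c$ (with $\epsilon_{abc} = 1$). Pulling back $h_{ij} = \eta_{IJ}\omega^I_i\omega^J_j$ to these coordinates, and using $\omega^I(\zeta_a) = \Lambda^I_a$ together with the decomposition $\eta_{IJ} = \Gamma_{cd}R^c_I R^d_J$ from Eq.~(\ref{mydecom}), the identity $R^c_I \Lambda^I_a = \delta^c_a$ yields the diagonal form $h_{ab}(y) = \Gamma_{ab} = a_{(a)}^2\,\delta_{ab}$. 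This is just the statement that the rotation $\Lambda$ was introduced in the first place precisely to diagonalize $\eta_{IJ}$.

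Restricting to $\tilde{\sigma}_a$, the induced $2\times 2$ metric is diagonal with entries $a_{(b)}^2$ and $a_{(c)}^2$, of determinant $a_{(b)}^2 a_{(c)}^2$, so the area element is $|a_{(b)} a_{(c)}|\,dy^b\, dy^c$. Integrating over the fiducial ranges gives $\mathrm{Area}(\tilde{\sigma}_a) = L_b L_c |a_{(b)} a_{(c)}|$. Inserting $p_a = \mathrm{sgn}(a_{(a)})|a_b a_c|$ from Eq.~(\ref{momenta}) together with the rescaling $\tilde{p}_a = L_b L_c\, p_a$, this coincides with $|\tilde{p}_a|$.

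I do not expect any serious obstacle here; the one step that requires some care is confirming that the rotated fiducial cell is indeed a rectangular box of edge lengths $L_a$ in the $y$-coordinates, which rests on the invariance of the fiducial metric under $\Lambda$ established immediately above the lemma and ensures that $L_b L_c$ is the fiducial area of the face. Equivalently, one can bypass the coordinate calculation altogether by exploiting the diagonality of $E^i_a = |\mathrm{det}(\omega^I_i)|\,p_{(a)} \zeta^i_a$ along the $\zeta_a$ directions and computing the flux of the densitized triad across $\tilde{\sigma}_a$, which directly produces $L_b L_c |p_a| = |\tilde{p}_a|$.
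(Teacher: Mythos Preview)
Your proposal is correct and follows essentially the same route as the paper's proof: both introduce coordinates adapted to the rotated frame $\zeta_a$, observe that the physical metric becomes diagonal with entries $a_{(a)}^2$ in these coordinates, and then integrate the induced area form over the face of the rotated fiducial cell. The only cosmetic difference is that the paper packages the fiducial-area step via a rotation map $F$ with $F_*\xi_I=\zeta_I$ and the pullback identity $F^*(d\tilde{x}^1\wedge d\tilde{x}^2)=\omega^1\wedge\omega^2$, whereas you argue directly that the rotated cell is a box $[0,L_a]$ in the $y$-coordinates; both arguments rest on the same invariance of the fiducial metric noted just above the lemma.
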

\begin{proof}
    Considering the face $\Tilde{\sigma}_3$ on the rotated fiducial cell. Let $\{x^i\}$ local coordinates in which $\xi^i_a=\delta^i_a$. Since $[\zeta_a,\zeta_b]=0$, then exist local coordinates $\{\Tilde{x}^i\}$ such that $\zeta^i_a=\delta^i_a$.\\
    We can write the metric on $\Sigma$ in these new coordinates
    $h=a_1^2(d\Tilde{x}^1)^2+a_2^2(d\Tilde{x}^2)^2+a_3^2(d\Tilde{x}^3)^2$. Let $\iota:\Tilde{\sigma}_3\xrightarrow{}\Sigma$ the inclusion map and $F:\Sigma\to\Sigma$ a rotation such that $F_*\xi_1=\zeta_1$ and so on for $2$ and $3$.\\
    For $\Tilde{\sigma}_3$ we have:
    \begin{enumerate}
        \item $\Tilde{\sigma}_3=F(\sigma_{3})$, where $\sigma_3$ is the face of the fiducial cell.
        \item The metric on the surface $\Tilde{\sigma}_3$ is $q=\iota^*h=a_1^2(d\Tilde{x}^1)^2+a_2^2(d\Tilde{x}^2)^2$. Hence, the volume form is $\textit{Vol}_q=|a_1a_2|d\Tilde{x}^1\wedge d\Tilde{x}^2=|p_3|d\Tilde{x}^1\wedge d\Tilde{x}^2$. 
    \end{enumerate}
   Thus, the area of $\Tilde{\sigma}_3$ reads
   \begin{align*}
       \textit{Ar}(\Tilde{\sigma}_3)&=\int_{\Tilde{\sigma}_3}|p_3|d\Tilde{x}^1\wedge d\Tilde{x}^2=|p_3|\int_{F(\sigma_3)}d\Tilde{x}^1\wedge d\Tilde{x}^2\\
       &=|p_3|\int_{\sigma_3}F^*(d\Tilde{x}^1\wedge d\Tilde{x}^2)=|p_3|\int_{\sigma_3}\omega^1\wedge\omega^2\\
       &=|p_3|L_1L_2=|\Tilde{p}_3|.
   \end{align*}
   The same procedure can be repeated for the other two faces.
\end{proof}

Considering also rescaled connection variables $\Tilde{c}_a=L_ac_a$, the couple $(\Tilde{c}_a,\Tilde{p}_b)$ is exactly the one of the diagonal case. Hence, we can choose the phase-space $\{\Tilde{p}_1,\Tilde{p}_2,\Tilde{p}_3,\theta,\psi,\varphi,\Tilde{c}_1,\Tilde{c}_2,\Tilde{c}_3,\pi_{\theta},\pi_{\psi},\pi_{\varphi}\}$ with (non-vanishing) Poisson brackets
\begin{align*}
    &\{\Tilde{c}^a,\Tilde{p}_b\}=k\gamma'\delta^a_b,\\
    &\{\theta,\pi_{\theta}\}=1,\\
    &\{\psi,\pi_{\psi}\}=1,\\
    &\{\varphi,\pi_{\varphi}\}=1.
\end{align*}

Considering the quantum states $|\tilde{p}_1,\tilde{p}_2,\tilde{p}_3,\theta,\psi,\varphi\rangle$, these states are eigestates of quantum geometry. For the Lemma \ref{area}, in such a state the face $\Tilde{\sigma}_3$ in the plane ${\zeta_1,\zeta_2}$ of the rotated fiducial cell has area $|\tilde{p}_3|$. Furthermore, the classical volume has the same expression as the diagonal one, just like the volume operator. We obtain the classical expression
\begin{align}
\nonumber
    \textit{Vol}=&\int d^3x\sqrt{\left|\tfrac{1}{6}\epsilon^{abc}\epsilon_{ijk}E^i_aE^j_bE^k_c\right|}\\ \nonumber
    &=V_0\sqrt{\left|\tfrac{1}{6}\epsilon^{abc}\epsilon_{IJK}p^I_ap^J_bp^K_c\right|}\\ \nonumber
    &=V_0\sqrt{\left|\tfrac{1}{6}\epsilon^{abc}\epsilon_{IJK}\Lambda^I_a\Lambda^J_b\Lambda^K_cp_ap_bp_c\right|}\\
    &=V_0\sqrt{|p_1p_2p_3|}=\sqrt{|\tilde{p}_1\tilde{p}_2\tilde{p}_3|}
\end{align}
Thus, all the geometry information is contained in the ``diagonal" variables $\Tilde{p}_a$, which represent the fluxes, with their conjugate momenta $\Tilde{c}_a$. Hence, as in the diagonal case, the couple $(\Tilde{p}_a,\Tilde{c}_b)$ can be quantized in the loop formalism and their action on the states is
\begin{align}
    &\hat{\Tilde{p}}_1|\Tilde{p}_1,\Tilde{p}_2,\Tilde{p}_3,\theta,\psi,\varphi\rangle=\Tilde{p}_1|\Tilde{p}_1,\Tilde{p}_2,\Tilde{p}_3,\theta,\psi,\varphi\rangle,\\
    &\widehat{\exp(i\lambda \Tilde{c_1})}|\Tilde{p}_1,\Tilde{p}_2,\Tilde{p}_3,\theta,\psi,\varphi\rangle=|\Tilde{p}_1-k\gamma\hbar\lambda,\Tilde{p}_2,\Tilde{p}_3,\theta,\psi,\varphi\rangle,
\end{align}
and similarly for $\hat{\Tilde{p}}_2$ and $\hat{\Tilde{p}}_3$.\\

Since $\hat{\tilde{p}}_a$ is well-defined as an operator, the quantization of the volume is trivial. On the previously defined states, the volume operator acts as
\begin{equation*}
\hat{V}|\Tilde{p}_1,\Tilde{p}_2,\Tilde{p}_3,\theta,\psi,\varphi\rangle=\sqrt{|\tilde{p}_1\tilde{p}_2\tilde{p}_3|}|\Tilde{p}_1,\Tilde{p}_2,\Tilde{p}_3,\theta,\psi,\varphi\rangle
\end{equation*}
The information about quantum geometry is wholly encoded by the fluxes $\Tilde{p}_a$. The angles $\theta,\psi,\varphi$ represent how much non-diagonal is the state, i.e. greater the values more the dreibein is rotated with respect to the left-invariant vectors, which is important in the choice of a suitable plaquette on which compute the curvature operator. As already seen, this choice is arbitrary and we can choose the plaquette whose area is proportional to $\Tilde{p}_a$. Furthermore, without any ambiguity, the volume depends only on momenta. At a quantum level, the Volume operator has the same eigenstates of the fluxes and eigenvalues that depend only on the eigenvalues of the fluxes.\\

The kinematical states of the theory are linear combinations of $|\Tilde{p}_1,\Tilde{p}_2,\Tilde{p}_3,\theta,\psi,\varphi\rangle$, in which $\Tilde{p}_a$ are quantized in the loop framework. About the angles, it is possible to use the Weyl quantization. The fundamental operators read
\begin{align}
    &e^{i\xi\hat{\theta}}|\Tilde{p}_1,\Tilde{p}_2,\Tilde{p}_3,\theta,\psi,\varphi\rangle=e^{i\xi\theta}|\Tilde{p}_1,\Tilde{p}_2,\Tilde{p}_3,\theta,\psi,\varphi\rangle,\\
    &e^{i\eta\hat{\pi}_{\theta}}|\Tilde{p}_1,\Tilde{p}_2,\Tilde{p}_3,\theta,\psi,\varphi\rangle=|\Tilde{p}_1,\Tilde{p}_2,\Tilde{p}_3,\theta+\eta,\psi,\varphi\rangle,
\end{align}
and similarly for $\psi$ and $\varphi$.\\
In this vector space, a norm can be defined such that it is the same as the diagonal case. The diagonal states are the ones with null angles $|\Tilde{p}_1,\Tilde{p}_2,\Tilde{p}_3,0,0,0\rangle$. Thus, is always possible to write any state as a linear transformation of a diagonal one
\begin{align*}
&|\Tilde{p}_1,\Tilde{p}_2,\Tilde{p}_3,\varphi_1,\varphi_2,\varphi_3\rangle=\\
&=e^{i\varphi_1\hat{\pi}_{\theta}}e^{i\varphi_2\hat{\pi}_{\psi}}e^{i\varphi_3\hat{\pi}_{\varphi}}|\Tilde{p}_1,\Tilde{p}_2,\Tilde{p}_3,0,0,0\rangle.
\end{align*}
To define a kinematical Hilbert space, we need to equip it with a scalar product. From the quantization properties, the scalar product must induce a norm on the basis of states that have to satisfy
\begin{align*}
    &|||\Tilde{p}_1,\Tilde{p}_2,\Tilde{p}_3,\varphi_1,\varphi_2,\varphi_3\rangle||^2=\\
    &=\langle\Tilde{p}_1,\Tilde{p}_2,\Tilde{p}_3,\varphi_1,\varphi_2,\varphi_3|\Tilde{p}_1,\Tilde{p}_2,\Tilde{p}_3,\varphi_1,\varphi_2,\varphi_3\rangle\\ 
    &=\langle\Tilde{p}_1,\Tilde{p}_2,\Tilde{p}_3,0,0,0|e^{-i\varphi_3\hat{\pi}_{\varphi}}e^{-i\varphi_2\hat{\pi}_{\psi}}e^{-i\varphi_1\hat{\pi}_{\theta}}\times\\
    &\ \ \ \ \ \ \ \ \ \ \ \ \ \ \ \ \ \ \ \ \ \times e^{i\varphi_1\hat{\pi}_{\theta}}e^{i\varphi_2\hat{\pi}_{\psi}}e^{i\varphi_3\hat{\pi}_{\varphi}}|\Tilde{p}_1,\Tilde{p}_2,\Tilde{p}_3,0,0,0\rangle\\
    &=\langle\Tilde{p}_1,\Tilde{p}_2,\Tilde{p}_3,0,0,0|\Tilde{p}_1,\Tilde{p}_2,\Tilde{p}_3,0,0,0\rangle\\
    &:=|||\Tilde{p}_1,\Tilde{p}_2,\Tilde{p}_3\rangle||^2_{diag}=1,
\end{align*}
in which $||\cdot||_{diag}$ is the norm of the state in the diagonal theory.\\
The property of leading back the scalar product to the diagonal one holds not only for the norm. In fact, consider a generic scalar product between two states
\begin{align*}
    &\langle\Tilde{p}_1,\Tilde{p}_2,\Tilde{p}_3,\varphi_1,\varphi_2,\varphi_3|\Tilde{p}'_1,\Tilde{p}'_2,\Tilde{p}'_3,\varphi'_1,\varphi'_2,\varphi'_3\rangle=\\
    &=\langle\Tilde{p}_1,\Tilde{p}_2,\Tilde{p}_3,0,0,0|e^{i(\varphi_1-\varphi'_1)\hat{p}_{\theta}}e^{i(\varphi_2-\varphi'_2)\hat{p}_{\psi}}\times\\
    &\ \ \ \ \ \ \ \ \ \ \ \ \ \ \ \ \ \ \ \ \ \ \ \ \ \ \ \ \times e^{i(\varphi_3-\varphi'_3)\hat{p}_{\varphi}}|\Tilde{p}'_1,\Tilde{p}'_2,\Tilde{p}'_3,0,0,0\rangle\\
    &=\langle\Tilde{p}_1,\Tilde{p}_2,\Tilde{p}_3,0,0,0|\Tilde{p}'_1,\Tilde{p}'_2,\Tilde{p}'_3,\varphi_1-\varphi'_1,\varphi_2-\varphi'_2,\varphi_3-\varphi'_3\rangle.
\end{align*}
If $\varphi_r=\varphi'_r$, we find the scalar product between diagonal states. So, the scalar product between equally rotated states is imposed to be
\begin{align*}
    &\langle\Tilde{p}_1,\Tilde{p}_2,\Tilde{p}_3,\varphi_1,\varphi_2,\varphi_3|\Tilde{p}'_1,\Tilde{p}'_2,\Tilde{p}'_3,\varphi_1,\varphi_2,\varphi_3\rangle=\\
    &\langle\Tilde{p}_1,\Tilde{p}_2,\Tilde{p}_3|\Tilde{p}'_1,\Tilde{p}'_2,\Tilde{p}'_3\rangle_{diag}=\delta_{\Tilde{p}_1\Tilde{p}'_1}\delta_{\Tilde{p}_2\Tilde{p}'_2}\delta_{\Tilde{p}_3\Tilde{p}'_3}.
\end{align*}
This is coherent with our interpretation of the quantum kinematics of the angles.\\ 
There is no unique choice of a scalar product that induces the properties above. However, we require that $|\Tilde{p}_1,\Tilde{p}_2,\Tilde{p}_3,\varphi_1,\varphi_2,\varphi_3\rangle|$ are orthonormal states then, in according with the quantization procedure, we are allowed to choose as the scalar product 
\begin{align}
    \label{scalprod}
    \nonumber &\langle\Tilde{p}_1,\Tilde{p}_2,\Tilde{p}_3,\varphi_1,\varphi_2,\varphi_3|\Tilde{p}'_1,\Tilde{p}'_2,\Tilde{p}'_3,\varphi'_1,\varphi'_2,\varphi'_3\rangle\\
    &:=\delta_{\Tilde{p}_1\Tilde{p}'_1}\delta_{\Tilde{p}_2\Tilde{p}'_2}\delta_{\Tilde{p}_3\Tilde{p}'_3}\delta(\varphi_1-\varphi'_1)\delta(\varphi_2-\varphi'_2)\delta(\varphi_3-\varphi'_3)
\end{align}
The kinematical Hilbert space of the theory is equipped with the scalar product (\ref{scalprod}) and it has an orthonormal basis $\{|\Tilde{p}_1,\Tilde{p}_2,\Tilde{p}_3,\theta,\psi,\varphi\rangle\}$. In the kinematical Hilbert space, the angles do not play any role, hence, the kinematical theory reflects the properties of the geometric operators.

\section{Representation quantization \label{sec:repp}}
The quantization of homogeneous models was already implemented by M.Bojowald in his work 'Mathematical Structure of Loop Quantum Cosmology: Homogeneous Models' \cite{bojowald2013mathematical}, in which the same procedure of the full theory is used to find the kinematical Hilbert space and the fundamental operators: pointwise holonomy and the momenta operator. The theory is similar to the LQG but it is described in terms of the homogeneous part of the connection $\phi^a_I$ instead of the full one. Nevertheless, in the \ref{sec:const1}, the cosmological theory of the non-diagonal Bianchi I model emerges naturally with a null Gauss constraint. Moreover, the holonomy has the same commutation relation of the diagonal case by Theorem \ref{theocomm}. Hence, we want the theory to be expressed as a $U(1)^3$ theory instead of $SU(2)$, as in the usual approach to diagonal models. To do that, we implement the kinematical quantization procedure shown in \cite{bojowald2013mathematical} in our case, imposing that the states are representations of $U(1)^3$. This formulation seems to be more general than the one proposed in \ref{sec:QuaGeo}, as it holds for any Bianchi models, but it presents some issues and problems with physical interpretation.

\subsection{$U(1)^3$-holonomy \label{U13}}
The choice of the Ashtekar variables $c_1,c_2,c_3$ for the cosmology, seems to be the only reasonable one. However, in the non-diagonal case, there is ambiguity. These variables do not emerge naturally from the geometry but they are useful to show the relation with the diagonal case. We are interested in looking for a more natural set of variables. To do that the expansion in (\ref{nondexp}) can be useful, we can choose the argument of the trigonometric functions as the configuration variables
\begin{equation}
    c_{II}=\sqrt{\sum_a \phi^a_I\phi^a_I}.
\end{equation}
The square root exists because the argument is a sum of squares (in this Section we do not use the Einstein summation convention).\\
Moreover, it is possible to find the conjugate momenta in terms of connection and dreibein. Considering a function $F^J$, imposing the invariance of the Poisson brackets
\begin{align}
    k\gamma'\delta_I^J=\{c_{II},F^J\}_{\phi^a_I,p^I_a}=k\gamma'\frac{1}{c_{II}}\sum_K\delta^K_I\phi^c_K\frac{\partial F^J}{\partial p^K_c},
\end{align}
we obtain $$\frac{\partial F^J}{\partial p^K_c}\propto(\phi^{-1})^J_c,$$ Hence, the momenta read
\begin{equation}
    p^I=c_{II}\sum_a(\phi^{-1})^I_ap^I_a.
\end{equation}
Notice that also $$F^J=\frac{1}{c_{JJ}}\sum_a p^J_a\phi^a_J$$ give us the correct Poisson bracket but the first one will result more useful in the next Section.\\

For the quantisation, the procedure proposed by M.Bojowald in \cite{bojowald2013mathematical} is implemented in the $U(1)^3$ case.\\
In particular, to adapt the formalism to the original paper, rescaled variables are considered
\begin{equation}
    \tilde{c}_{II}=L_I\sqrt{\sum_a \phi^a_I\phi^a_I},\ \ \ \ \ \tilde{p}^I=L_JL_K p^I\ \ \ \textrm{with $\epsilon_{IJK}=1$.}
\end{equation}
The holonomy-flux algebra is implemented considering representations of holonomies as kinetic states of the theory with quantum number $n$ label of representation. Thus, a state reads $\rho_{\lambda,n}(g_I)=(\exp(i\lambda\tilde{c}_{II}))^n$, with $\lambda\in\mathbb{Q}$ and $n\in\mathbb{N}$.\\
The binary operations have really simple expressions, the multiplication reads
\begin{align}
    \nonumber
    \rho_{\lambda_1,n_1}(g_I)\cdot\rho_{\lambda_2,n_2}(g_I):=&\rho_{z,n_1}(g_I)^{N_1}\rho_{z,n_2}(g_I)^{N_2}\\
    &=\rho_{z,N_1n_1+N_2n_2}(g_I)
\end{align}
with $z$ maximal rational such that $\lambda_1=N_1z$ and $\lambda_2=N_2z$. While multiplication between two elements with $I\neq J$ is the tensor product. The star operator is $\rho_{\lambda,n}(g_I)^*=\rho_{\lambda,-n}(g_I)$. For the inner product, the Haar measure on $U(1)$ is required. It can be defined by
\begin{align*}
    \mu_H(S)=\frac{1}{2\pi}m(f^{-1}(S)),
\end{align*}
for each $S\subset U(1)$, where $f$ is the function $f:[0,2\pi]\to U(1),\ t\mapsto (\cos(t),\sin(t))$ and $m$ is the usual Borel measure on the real line. Hence, the measure for the inner product is
\begin{equation}
    \int_{U(1)}d\mu_H(\exp(z\lambda\tilde{c}_{II}))=\frac{1}{2\pi}\int_0^{2\pi/z}d(z\tilde{c}_{II}).
\end{equation}
From this, we can define the inner product as
\begin{align}
\label{abelinn}
\nonumber
    &(\rho_{\lambda_1,n_1}(g_I),\rho_{\lambda_2,n_2}(g_J)):=\\
    &:=\frac{1}{(2\pi)^3}\prod_K\int_0^{2\pi/z}d(z\tilde{c}_{KK})\rho_{\lambda_1,-n_1}(g_I)\cdot\rho_{\lambda_2,n_2}(g_J).
\end{align}
Notice that for $I\neq J$ the inner product vanishes unless $\lambda_1n_1=0=\lambda_2n_2$. If $I=J$ one obtains
\begin{align*}
    &(\rho_{\lambda_1,n_1}(g_I),\rho_{\lambda_2,n_2}(g_I))\\
    &=\frac{1}{2\pi}\int_0^{2\pi/z}d(z\tilde{c}_{II})\,\rho_{\lambda_1,-n_1}(g_I)\cdot\rho_{\lambda_2,n_2}(g_I)\\
    &=\frac{1}{2\pi}\int_0^{2\pi} dx\, \rho_{N_2n_2-N_1n_1}(e^{ix})=\begin{cases}
    0 & \textrm{if}\ \lambda_1n_1\neq\lambda_2n_2\\
    1 & \textrm{if}\ \lambda_1n_1=\lambda_2n_2
    \end{cases}
\end{align*}
All the definitions above are coherent with the isotropic case studied in \cite{bojowald2013mathematical}.\\
Recalling that $\rho_n(\tau_I)=i$ for the $U(1)$ representations, the momentum operator can be derived from the general case,
\begin{equation}
    \hat{\tilde{p}}^I\rho_{\lambda,n}(g_J)=8\pi\gamma \ell_P^2\lambda\delta^I_J\rho_{\lambda,n}(g_J)
\end{equation}
With multiplication operator $\rho_{\lambda,n}(g_I)$ and momentum operator $\hat{\tilde{p}}^I$ one now can compute the commutators acting on a state $\psi(g_I)=\rho_{\lambda_2,n_2}(g_I)$ (for $J\neq I$ is the same since $J$-terms does not contribute), the calculation is pretty simple and one obtains
\begin{equation}
    [\rho_{\lambda,n}(g_I),\hat{\tilde{p}}^I]=-8\pi\gamma\ell_P^2\lambda\rho_{\lambda,n}(g_I).
\end{equation}
There is no reordering operator $\hat{R}^I$, then the commutator is exactly the quantization of the Poisson bracket. It is possible to show that the reordering operator vanishes in the abelian theory
\begin{align*}
    &\hat{R}^I(\rho_{\lambda_1,n_1}(g_I)\cdots\rho_{\lambda_N,n_N}(g_I))\\
    &=iz\sum_{l=1}^N N_l\rho_{\lambda_1,n_1}(g_I)\cdots \rho_{z,n_l}(g_I)^{N_l}\cdots\rho_{\lambda_N,n_N}(g_I)+\\
    &\ \ \ \ \ \ \  -i\sum_{l=1}^N\lambda_l\,\rho_{\lambda_1,n_1}(g_I)\cdots\rho_{\lambda_N,n_N}(g_I)=0
\end{align*}
with $z$ maximal rational such that $\lambda_l=zN_l$. The reason is that refinement does not exist in the abelian theory; in fact, the action of $\rho_{\lambda,0}(g_I)$ as multiplication operator is trivial. In the $SU(2)$ theory, the reordering operator reads $R^I_a=-(8\pi\gamma i\ell_P^2)^{-1}\tilde{p}^I_a(\rho_{z,0}-1)$, but in the abelian theory the term in the parenthesis acts as a null projector on every state, then $\hat{R}^I=0$.\\

This Hilbert space is isomorphic to the LQC's usual one: the space of functions on the Bohr compactification $\overline{\mathbb{R}}_{\textit{Bohr}}$ of the real line (i.e. almost periodic functions). It exists a map between the abelian states and almost periodic functions $B:\rho_{\lambda,n}(g_I)\mapsto\exp(i\lambda n \tilde{c}_{II})$.\\
This map is a $^*$-algebra morphism and commutes with the action of $\hat{\tilde{p}}^I$ \cite{bojowald2013mathematical}. The map is surjective: given any $\exp(ia\tilde{c}_{II})$ is is always possible to find $\rho_{\lambda,n}(g_I)$ such that $\lambda n=a$ and so, to satisfy $B(\rho_{\lambda,n}(g_I))=\exp(ia\tilde{c}_{II})$. $B$ is also injective \footnote{The injectivity of $B$ descents also from the fact that in the original paper \cite{bojowald2013mathematical} $B$ it is an isometry.} because of Lemma \ref{lemma:element}.
\begin{mylemma}
\label{lemma:element}
Two elements $\rho_{\lambda_1,n_1}(g_I)$ and $\rho_{\lambda_2,n_2}(g_I)$ such that $\lambda_1 n_1=\lambda_2 n_2$ are the same point in the Hilbert space with the scalar product defined in (\ref{abelinn}).
\end{mylemma}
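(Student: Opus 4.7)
The plan is to show that $\|\rho_{\lambda_1,n_1}(g_I) - \rho_{\lambda_2,n_2}(g_I)\|^2 = 0$ under the inner product (\ref{abelinn}). Since a null-norm vector must be identified with zero once one passes to the quotient pre-Hilbert space (equivalently, the Hilbert completion), this immediately yields the claim. By sesquilinearity,
\[
\|\rho_{\lambda_1,n_1} - \rho_{\lambda_2,n_2}\|^2 = (\rho_{\lambda_1,n_1},\rho_{\lambda_1,n_1}) + (\rho_{\lambda_2,n_2},\rho_{\lambda_2,n_2}) - 2\,\Re(\rho_{\lambda_1,n_1},\rho_{\lambda_2,n_2}),
\]
so I need to evaluate four inner products of the form $(\rho_{\mu,m}(g_I),\rho_{\mu',m'}(g_I))$ with $\mu m = \mu' m'$ (trivially in the two diagonal cases, and by hypothesis in the cross terms).

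For each such inner product I would invoke the calculation carried out immediately above the lemma. Combining $\rho_{\lambda,n}^{*} = \rho_{\lambda,-n}$ with the multiplication rule, which produces $\rho_{z,\,N'm' - Nm}(g_I)$ for $z$ the maximal rational satisfying $\mu = Nz$ and $\mu' = N'z$ (well defined because $\mu,\mu'\in\mathbb{Q}$), the Haar integral collapses to
\[
\frac{1}{2\pi}\int_0^{2\pi} dx\, e^{ix(N'm' - Nm)} = \delta_{Nm,\,N'm'}.
\]
Since $z\neq 0$, the condition $Nm = N'm'$ is equivalent to $\mu m = \mu' m'$, so under the hypothesis of the lemma (and in the two diagonal cases) every inner product in the expansion equals $1$.

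Assembling the four contributions then gives $\|\rho_{\lambda_1,n_1} - \rho_{\lambda_2,n_2}\|^2 = 1 + 1 - 1 - 1 = 0$, and the two elements collapse to the same vector in the Hilbert space. I do not expect any real obstacle here: the whole argument is a direct corollary of the Kronecker-delta property of the Haar integral already established in the derivation of (\ref{abelinn}). The only point that deserves explicit care is the existence of a single maximal rational $z$ resolving both $\lambda_1$ and $\lambda_2$ simultaneously, but this is automatic because both $\lambda_i$ lie in $\mathbb{Q}$, so one can take $z$ to be (a multiple of) the reciprocal of the least common denominator and then define $N_1, N_2$ accordingly.
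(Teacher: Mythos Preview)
Your proposal is correct and follows essentially the same route as the paper: both expand $\|\rho_{\lambda_1,n_1}-\rho_{\lambda_2,n_2}\|^2$ via sesquilinearity and use the already-established fact that the inner product equals $1$ whenever $\lambda_1 n_1=\lambda_2 n_2$, obtaining $2-2\cdot 1=0$. Your version spells out the Haar-integral justification for that fact more explicitly, but the argument is the same.
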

\begin{proof}
Let $\rho_{\lambda_1,n_1}(g_I)$ and $\rho_{\lambda_2,n_2}(g_I)$ such that $\lambda_1 n_1=\lambda_2 n_2$. Recalling that $(\rho_{\lambda_1,n_1}(g_I),\rho_{\lambda_2,n_2}(g_I))=1$ if $\lambda_1 n_1=\lambda_2 n_2$.\\
We want to compute the distance between these two elements
\begin{align*}
    &||\rho_{\lambda_1,n_1}(g_I)-\rho_{\lambda_2,n_2}(g_I)||^2\\
    &=||\rho_{\lambda_1,n_1}(g_I)||^2+||\rho_{\lambda_2,n_2}(g_I)||^2+\\
    &\ \ \ \ \ -(\rho_{\lambda_1,n_1}(g_I),\rho_{\lambda_2,n_2}(g_I))-(\rho_{\lambda_2,n_2}(g_I),\rho_{\lambda_1,n_1}(g_I))\\
    &=2-2\,\Re((\rho_{\lambda_1,n_1}(g_I),\rho_{\lambda_2,n_2}(g_I)))=0
\end{align*}
\end{proof}
In fact, the previous Lemma says us that the preimage of an element under $B$ consist of a point only. \\

The Hamiltonian can be implemented in the theory considering the expression showed in \cite{bojowald2000loopIII}. In Bianchi I model the Hamiltonian is given by the Euclidean term only and it reads
\begin{equation}
    \hat{H}=\frac{4i}{k\gamma^2\ell_P^2}\sum_{I,J,K}\epsilon^{IJK}tr(h_Ih_Jh_I^{-1}h_J^{-1}h_K[h_K^{-1},\hat{V}]).
\end{equation}
It is a really useful formula because it gives the right ordering of the terms in the Hamiltonian operator. Moreover, the expansion provides it in terms of trigonometric functions of $\tilde{c}_{II}$. The computation is quite tedious. First of all, one can separate the two terms of the commutator. The first term vanishes due to the symmetry of $tr(h_Ih_Jh_I^{-1}h_J^{-1})$. In fact in the fundamental representation, the trace of any $SU(2)$ matrix is real, then
$$tr(h_Ih_Jh_I^{-1}h_J^{-1})=tr((h_Ih_Jh_I^{-1}h_J^{-1})^{\dagger})=tr(h_Jh_Ih_J^{-1}h_I^{-1}).$$
The second term can be computed using the expansion (\ref{nondexp}) and the Lemma~\ref*{lemma1} (the Einstein's summation convention holds for the indices $a,b,c$ in the following formula)
\begin{widetext}
\begin{align*}
\nonumber
\sum_{IJK}\epsilon^{IJK}tr(h_Ih_Jh_I^{-1}h_J^{-1}h_K\hat{V}h_K^{-1})=\sum_{IJK}\epsilon^{IJK}\Bigg(4\bigg(\frac{\cos(\tfrac{1}{2}\tilde{c}_{II})\sin(\tfrac{1}{2}\tilde{c}_{II})\sin^2(\tfrac{1}{2}\tilde{c}_{JJ})}{\tilde{c}_{II}}\big(\tilde{\phi}^a_I-\tfrac{\tilde{c_{IJ}^2}}{c_{JJ}^2}\tilde{\phi}^a_J\big)-\textit{term ($I\leftrightarrow J$)}\bigg)\\ \nonumber
\times \Big(\cos(\tfrac{1}{2}\tilde{c}_{KK})\hat{V}\tfrac{\tilde{\phi}^a_K}{\tilde{c}_{KK}}\sin(\tfrac{1}{2}\tilde{c}_{KK})-\tfrac{\tilde{\phi}^a_K}{\tilde{c}_{KK}}\sin(\tfrac{1}{2}\tilde{c}_{KK})\hat{V}\cos(\tfrac{1}{2}\tilde{c}_{KK})\Big)\\ 
+4\big(\cos(\tfrac{1}{2}\tilde{c}_{II})\cos(\tfrac{1}{2}\tilde{c}_{JJ})-\tfrac{\tilde{c}_{IJ}^2}{\tilde{c}_{II}\tilde{c}_{JJ}}\sin(\tfrac{1}{2}\tilde{c}_{II})\sin(\tfrac{1}{2}\tilde{c}_{JJ})\big)\frac{\tilde{\phi}^a_I\tilde{\phi}^b_J}{\tilde{c}_{II}\tilde{c}_{JJ}}\sin(\tfrac{1}{2}\tilde{c}_{II})\sin(\tfrac{1}{2}\tilde{c}_{JJ})\times\\
\times\epsilon_{abc}\Big(\cos(\tfrac{1}{2}\tilde{c}_{KK})\hat{V}\tfrac{\tilde{\phi}^c_K}{\tilde{c}_{KK}}\sin(\tfrac{1}{2}\tilde{c}_{KK})-\tfrac{\tilde{\phi}^c_K}{\tilde{c}_{KK}}\sin(\tfrac{1}{2}\tilde{c}_{KK})\hat{V}\cos(\tfrac{1}{2}\tilde{c}_{KK})\Big)\\ 
+\bigg(\frac{\cos(\tfrac{1}{2}\tilde{c}_{II})\sin(\tfrac{1}{2}\tilde{c}_{II})\sin^2(\tfrac{1}{2}\tilde{c}_{JJ})}{\tilde{c}_{II}}\big(\tilde{\phi}^a_I-\tfrac{\tilde{c_{IJ}^2}}{c_{JJ}^2}\tilde{\phi}^a_J\big)-\textit{term ($I\leftrightarrow J$)}\bigg)\epsilon_{abc}\tfrac{\tilde{\phi}^b_K}{\tilde{c}_{KK}}\sin(\tfrac{1}{2}\tilde{c}_{KK})\hat{V}\tfrac{\tilde{\phi}^c_K}{\tilde{c}_{KK}}\sin(\tfrac{1}{2}\tilde{c}_{KK})\\
+4\big(\cos(\tfrac{1}{2}\tilde{c}_{II})\cos(\tfrac{1}{2}\tilde{c}_{JJ})-\tfrac{\tilde{c}_{IJ}^2}{\tilde{c}_{II}\tilde{c}_{JJ}}\sin(\tfrac{1}{2}\tilde{c}_{II})\sin(\tfrac{1}{2}\tilde{c}_{JJ})\big)\frac{\tilde{\phi}^a_I\tilde{\phi}^b_J}{\tilde{c}_{II}\tilde{c}_{JJ}}\sin(\tfrac{1}{2}\tilde{c}_{II})\sin(\tfrac{1}{2}\tilde{c}_{JJ})\epsilon_{abf}\epsilon_{fcd}\tfrac{\tilde{\phi}^c_K}{\tilde{c}_{KK}}\sin(\tfrac{1}{2}\tilde{c}_{KK})\hat{V}\tfrac{\tilde{\phi}^d_K}{\tilde{c}_{KK}}\sin(\tfrac{1}{2}\tilde{c}_{KK})\Bigg).
\end{align*}
\end{widetext}
The problems of this Hamiltonian are evident. Further the complexity, linear terms in the connection appear. An operator for the connection does not exist and it cannot be expressed in terms of $\tilde{c}_{II}$ and $\tilde{p}^I$. The same problem holds for the volume operator $\hat{V}$.\\
If one restricts in the diagonal case, the expression simplifies and we obtain
\begin{align}
\label{diagH}
    &\sum_{IJK}\epsilon^{IJK}tr(h_Ih_Jh_I^{-1}h_J^{-1}h_K\hat{V}h_K^{-1})=\\ \nonumber
    &=\sum_{IJK}\epsilon^{IJK}\bigg(4\,\epsilon_{IJK}\cos(\tfrac{1}{2}c_I)\cos(\tfrac{1}{2}c_J)\sin(\tfrac{1}{2}c_I)\sin(\tfrac{1}{2}c_J)\\ \nonumber
    &\ \ \ \ \ \times\Big(\cos(\tfrac{1}{2}c_K)\hat{V}\sin(\tfrac{1}{2}c_K)-\sin(\tfrac{1}{2}c_K)\hat{V}\cos(\tfrac{1}{2}c_K)\Big)\bigg).
\end{align}
So, it is easy to show that the Hamiltonian is the same as presented in \cite{bojowald2003homogeneous} (for details cf. Appendix \ref{appB}).\\
Despite the good properties of the Hilbert space and the holonomy-flux algebra, this approach is not useful. In particular, it fails in the implementation of angles in the kinematical states, that are dependent on only three variables $\tilde{c}_{II}$. It does not have a separable Hamiltonian, neither it provides proof of the independence of the geometry from the angles.\\
The same approach along the edges $\zeta_a$ can be considered. Still, it is not possible to find suitable conjugate momenta to the variables $\mathring{c}_{aa}=(\sum_b\mathring{\phi}^b_a\mathring{\phi^b_a})^{\frac{1}{2}}$ because in this formulation $\phi^a_I$ and $p^I_a$ are mixed.

\subsection{$U(1)^6$-holonomy \label{U16}}
The previous approach can be slightly modified considering as multiplication operator its ``natural extension" in a symmetric matrix (still Einstein summation convention is not adopted)
\begin{equation}
    c_{IJ}=\sqrt{\sum_a \phi^a_I\phi^a_J}.
\end{equation}
Also, the momenta are the ``generalization" of ones found before
\begin{equation}
    p^{IJ}=c_{IJ}\sum_a p^I_a(\phi^{-1})^J_a.
\end{equation}
It is not trivial to check that the conjugate momenta are symmetric. Considering the expression of the connection and of the dreibein as matrices
\begin{align*}
    \phi=-\frac{\gamma}{2N}C^{-1}R\dot{\eta}\\ 
    p=\Gamma \Lambda C^{-1}
\end{align*}
with $C=diag(a,b,c)$ and $\Gamma=\sqrt{\mathrm{det}(\Gamma_{ab})}$. The conjugate momenta read
\begin{equation}
    p^{IJ}\propto \phi^{-1}p^t\propto\dot{\eta}^{-1}\Lambda C C^{-1} R=\dot{\eta}^{-1},
\end{equation}
where $\eta$ is symmetric, then $p^{IJ}$ is symmetric too.\\
In these variables the classical scalar constraint $\mathcal{S}'$ can be written in a simple way
\begin{equation}
    \mathcal{S}'=\sum_{IJ}(p^{IJ}c_{IJ})^2-\sum_{IJKL}\frac{p^{IL}}{c_{IL}}c_{LJ}^2\frac{p^{JK}}{c_{JK}}c_{KI}^2.
\end{equation}
While if we implement the Thiemann's trick for the Hamiltonian we obtain exactly the same Hamiltonian shown in the previous Section because it comes from a general approach.\\
This formalism seems to adapt better to the problem than the only $c_{II}$ due to the natural emergence of six variables, but the quantisation program has several issues. To emulate the Hilbert space defined in Sec.\ref{U13}, we want to use as states the representations of $U(1)^6$-holonomy. In such a Hilbert space, the same properties of the $U(1)^3$-formulation hold. However, $U(1)^6$-holonomy has no physical meaning. In canonical LQC the power of $U(1)$ is due to the different directions in the space, hence $6$ holonomies can not have the usual interpretation as holonomy along an edge and do not have one.\\
A possible way to find a ``geometry+angles" interpretation from this approach can be to diagonalize the matrix $c_{IJ}$ and consider as variables the eigenvalues and the angles of the change-of-basis matrix. Unfortunately, the conjugate momenta are difficult to find and the Hamiltonian does not seem to simplify further.\\

In conclusion, the construction of a Hilbert space analogous to the one in the general homogeneous case and isomorphic to the space of functions on the Bohr compactification of the real line is not the correct way in which to proceed. In the last two Sections, it is shown that many issues emerge in the quantisation of the Hamiltonian and in the definition of the kinematical states. Thus, the linear term in the connection can not be ignored and one is not legitimate to consider almost-periodic functions derived from the expansion of the holonomy.

ou\section{Concluding remarks}
We analyze the formulation of the non-diagonal Bianchi I model in terms of the Ashtekar-Barbero-Immirzi variables, searching for the construction of a suitable kinetical Hilbert space.\\
This representation, in principle extendible to a generic Bianchi Universe, must be regarded as an intermediate step between the standard diagonal case, developed in \cite{ashtekar2009loop,bojowald2003homogeneous} and that one proposed in \cite{bojowald2013mathematical}, where the reduced variables keep all the required degrees of freedom to be associated with a non-zero Gauss constraint and the $SU(2)$ internal symmetry is properly recovered.\\

In our model, the Gauss constraint identically vanishes and, therefore, the procedure to construct a kinematical Hilbert space had to deal with essentially the $U(1)^3$ symmetry, but now three additional degrees of freedom come into the problem, corresponding to the three Euler angles, responsible for the Kasner axis rotation \cite{RYAN1972, Belinski_2014}.\\

Three different proposals for a viable Hilbert space have been formulated. The most interesting approach was, from a physical point of view, the possibility to achieve, via a proper rotation, three diagonal fluxes, resembling exactly those of the diagonal case (see Lemma~\ref{area}). This result allowed us to introduce a suitable scalar product, in which the diagonal components are still interpretable as eigenstates of the quantum geometry, while the three angles are associated with a natural orthonormality condition.\\
This picture has significant physical content since it is a sort of ``adiabatic kinematics" of the Euler angles, reflecting their classical adiabatic dynamics \cite{BKL70, BKL82, Montani_1995}.\\

The attempt to construct a $U(1)^3$ representation for the connection variables we want to transfer to the $U(1)^3$ group some of the issues obtained in \cite{bojowald2013mathematical} for the $SU(2)$ representation. This approach is equivalent to the Bohr compactification one. It had to deal with the non-trivial question that the angles are always involved in the argument of the almost periodic functions and a linear term appeared in the sine expansion of the holonomy. Therefore, a construction in terms of Bohr compactification of the real line was forbidden.\\

Finally, the idea to associate a $U(1)^6$ symmetry, which regards connections and angles on the same flooring, was investigated. This perspective is actually promising, but the identification of the suitable state labelling quantum numbers could not directly follow from a reduction of the spin-network structure. Moreover, the implementation of the Hamiltonian operator presents some points.\\

The relevance of the present study relies on the possibility to implement the BKL conjecture \cite{BKL82, Benini_2007} (see also \cite{Ashtekar2011BKL}) on the quantum sector. This scenario would correspond to implementing, point by point in space, a non-diagonal (locally homogeneous) dynamics. In fact, in the case of a generic inhomogeneous cosmological model, the concept of a diagonal representation has to be left. Of course, we could infer that, for a such general picture, the Loop Quantum Gravity theory holds without restrictions, but the validity of the BKL conjecture (\emph{de facto} freezing the spatial gradient dynamics) could allow the description of the quantum dynamics via a point-like extension of the present formulation, at least when the spatial curvature can be treated as a small contribution.

\appendix*

\section{Hamiltonian operator in the diagonal case \label{appB}}
The diagonal case can be derived by imposing the connection to be diagonal $\Tilde{\phi}^a_I=\Tilde{c}_I\delta^a_I$. So, $\Tilde{c}_II=\Tilde{c}_I$.
The expression of the Hamiltonian, in this case, results in a more simple form. The last two terms vanish due to the presence of a factor $\epsilon_{abc}\Tilde{\phi}^b_K\Tilde{\phi}^c_K$ that, in the diagonal case, in null due to the anti-symmetry of the Levi-Civita symbol. Moreover, we have $\Tilde{\phi}^a_I/\Tilde{c}_II=\delta^a_I$ and $\Tilde{c}_{IJ}=0$. Hence, the reads
\begin{align*}
&\sum_{IJK}\epsilon^{IJK}tr(h_Ih_Jh_I^{-1}h_J^{-1}h_K\hat{V}h_K^{-1})=\\
&\sum_{IJK}\epsilon^{IJK}\Bigg(4\bigg(\frac{\cos(\tfrac{1}{2}\tilde{c}_{I})\sin(\tfrac{1}{2}\tilde{c}_{I})\sin^2(\tfrac{1}{2}\tilde{c}_{J})}{\tilde{c}_{I}}\tilde{c}_I\delta^a_I-(I\leftrightarrow J)\bigg)\\ 
&\times \Big(\cos(\tfrac{1}{2}\tilde{c}_{K})\hat{V}\delta^a_K\sin(\tfrac{1}{2}\tilde{c}_{K})-\delta^a_K\sin(\tfrac{1}{2}\tilde{c}_{K})\hat{V}\cos(\tfrac{1}{2}\tilde{c}_{K})\Big)\\ 
&+4\cos(\tfrac{1}{2}\tilde{c}_{I})\cos(\tfrac{1}{2}\tilde{c}_{J})\delta^a_I\delta^b_J\sin(\tfrac{1}{2}\tilde{c}_{I})\sin(\tfrac{1}{2}\tilde{c}_{J})\epsilon_{abc}\times\\
&\times\Big(\cos(\tfrac{1}{2}\tilde{c}_{K})\hat{V}\delta^c_K\sin(\tfrac{1}{2}\tilde{c}_{K})-\delta^c_K\sin(\tfrac{1}{2}\tilde{c}_{K})\hat{V}\cos(\tfrac{1}{2}\tilde{c}_{K})\Big)\Bigg)\\
&=\sum_{IJK}\epsilon^{IJK}\Bigg(4\bigg(\cos(\tfrac{1}{2}\tilde{c}_{I})\sin(\tfrac{1}{2}\tilde{c}_{I})\sin^2(\tfrac{1}{2}\tilde{c}_{J})\delta_{IK}\times\\
&\times\Big(\cos(\tfrac{1}{2}\tilde{c}_{I})\hat{V}\sin(\tfrac{1}{2}\tilde{c}_{I})-\sin(\tfrac{1}{2}\tilde{c}_{I})\hat{V}\cos(\tfrac{1}{2}\tilde{c}_{I})\Big)-(I\leftrightarrow J)\bigg)\\
&+4\cos(\tfrac{1}{2}\tilde{c}_{I})\cos(\tfrac{1}{2}\tilde{c}_{J})\sin(\tfrac{1}{2}\tilde{c}_{I})\sin(\tfrac{1}{2}\tilde{c}_{J})\times\\
&\times\epsilon_{IJK}\Big(\cos(\tfrac{1}{2}\tilde{c}_{K})\hat{V}\sin(\tfrac{1}{2}\tilde{c}_{K})-\sin(\tfrac{1}{2}\tilde{c}_{K})\hat{V}\cos(\tfrac{1}{2}\tilde{c}_{K})\Big)\Bigg)
\end{align*}
Due to the presence of $\delta_{IK}$, the first term vanishes. Thus, we obtain the formula in Eq.(\ref{diagH})
\begin{align*}
    &\sum_{IJK}4\epsilon^{IJK}\cos(\tfrac{1}{2}\tilde{c}_{I})\cos(\tfrac{1}{2}\tilde{c}_{J})\sin(\tfrac{1}{2}\tilde{c}_{I})\sin(\tfrac{1}{2}\tilde{c}_{J})\times\\
    &\times\epsilon_{IJK}\Big(\cos(\tfrac{1}{2}\tilde{c}_{K})\hat{V}\sin(\tfrac{1}{2}\tilde{c}_{K})-\sin(\tfrac{1}{2}\tilde{c}_{K})\hat{V}\cos(\tfrac{1}{2}\tilde{c}_{K})\Big).
\end{align*}
The presence of two Levi-Civita symbol gives us the sum of the non-null pairs $I,J$ associated with the same index $K$. Hence, this formula can be rewritten as 
\begin{align*}
    &\sum_{\epsilon_{IJK}=1}8\cos(\tfrac{1}{2}\tilde{c}_{I})\cos(\tfrac{1}{2}\tilde{c}_{J})\sin(\tfrac{1}{2}\tilde{c}_{I})\sin(\tfrac{1}{2}\tilde{c}_{J})\times\\
    &\times\Big(\cos(\tfrac{1}{2}\tilde{c}_{K})\hat{V}\sin(\tfrac{1}{2}\tilde{c}_{K})-\sin(\tfrac{1}{2}\tilde{c}_{K})\hat{V}\cos(\tfrac{1}{2}\tilde{c}_{K})\Big).
\end{align*}
From this, we can expand the summation and the Hamiltonian reads
\begin{align*}
    \hat{H}=&\frac{32i}{k\gamma^2\ell_P^2}\cos(\tfrac{1}{2}c_1)\cos(\tfrac{1}{2}c_2)\sin(\tfrac{1}{2}c_1)\sin(\tfrac{1}{2}c_2)\\
    &\ \ \ \ \ \ \ \ \times\Big(\cos(\tfrac{1}{2}c_3)\hat{V}\sin(\tfrac{1}{2}c_3)-\sin(\tfrac{1}{2}c_3)\hat{V}\cos(\tfrac{1}{2}c_3)\Big)\\
    &+\cos(\tfrac{1}{2}c_3)\cos(\tfrac{1}{2}c_1)\sin(\tfrac{1}{2}c_3)\sin(\tfrac{1}{2}c_1)\\
    &\ \ \ \ \ \ \ \ \times\Big(\cos(\tfrac{1}{2}c_2)\hat{V}\sin(\tfrac{1}{2}c_2)-\sin(\tfrac{1}{2}c_2)\hat{V}\cos(\tfrac{1}{2}c_2)\Big)\\ \nonumber
    &+\cos(\tfrac{1}{2}c_2)\cos(\tfrac{1}{2}c_3)\sin(\tfrac{1}{2}c_2)\sin(\tfrac{1}{2}c_3)\\
    &\ \ \ \ \ \ \ \ \times\Big(\cos(\tfrac{1}{2}c_1)\hat{V}\sin(\tfrac{1}{2}c_1)-\sin(\tfrac{1}{2}c_1)\hat{V}\cos(\tfrac{1}{2}c_1)\Big)
\end{align*}
That is the same Hamiltonian presented in \cite{bojowald2003homogeneous}.
\nocite{*}
\bibliographystyle{plain}
\bibliography{biblio}

\end{document}